
\documentclass{llncs}
\usepackage{amsmath}
\usepackage{amssymb}
\usepackage{enumerate}
\usepackage{varioref}
\usepackage{charter,eulervm}
\usepackage{graphicx}
\usepackage{subfig,wrapfig}
\usepackage{boxedminipage}

\title{Results on independent sets in categorical products of 
graphs,  
the ultimate categorical independence ratio and the 
ultimate categorical independent domination ratio}

\author{
 Wing-Kai~Hon\inst{1}
\and
 Ton~Kloks
\and 
 Hsiang-Hsuan~Liu\inst{1}
\and 
 Sheung-Hung~Poon\inst{1} 
\and
 Yue-Li~Wang\inst{2}} 
\institute{
 Department of Computer Science\\
 National Tsing Hua University, Taiwan\\
 {\tt \{wkhon,hhliu,spoon\}@cs.nthu.edu.tw} 
\and 
 Department of Information Management\\ 
 National Taiwan University of Science and Technology\\ 
 {\tt ylwang@cs.ntust.edu.tw}
}

\pagestyle{plain}
\begin{document}

\maketitle

\begin{abstract}
We show that there are polynomial-time 
algorithms to compute maximum independent sets  
in the categorical products of two cographs and 
two splitgraphs. The ultimate categorical 
independence ratio of a graph $G$ is defined as 
$\lim_{k \rightarrow \infty} \alpha(G^k)/n^k$. 
The ultimate categorical 
independence ratio is polynomial  
for cographs, permutation graphs, interval graphs, graphs 
of bounded treewidth and splitgraphs.  
When $G$ is a planar graph of maximal degree three then  
$\alpha(G \times K_4)$ is NP-complete. We present a PTAS 
for the ultimate categorical independence ratio of 
planar graphs.  
We present an $O^{\ast}(n^{n/3})$ exact, exponential algorithm 
for general graphs. 
We prove that 
the ultimate categorical independent domination 
ratio for complete multipartite graphs is zero, 
except when the graph is complete bipartite with  
color classes of equal size 
(in which case it is $1/2$). 
\end{abstract}

\section{Introduction}
\label{section intro}

Let $G$ and $H$ be two graphs. The categorical product also 
travels under the guise of tensor product, or direct product, 
or Kronecker product, and even more names have been given to it. 
It is defined as follows. It is a graph, denoted as 
$G \times H$. Its vertices are the ordered pairs 
$(g,h)$ where $g \in V(G)$ and $h \in V(H)$. Two of its 
vertices, say $(g_1,h_1)$ and $(g_2,h_2)$ are adjacent 
if 
\[\{\;g_1,\;g_2\;\} \in E(G) 
\quad\text{and}\quad \{\;h_1,\;h_2\;\} \in E(H).\]  

\bigskip 

One of the reasons for its popularity is Hedetniemi's 
conjecture, which is now more 
than 40 years old~\cite{kn:hedetniemi,kn:sauer,kn:tardif,kn:zhu}. 
\begin{conjecture}
For any two graphs $G$ and $H$ 
\[\boxed{\chi(G \times H) = \min\;\{\;\chi(G),\;\chi(H)\;\}.}\] 
\end{conjecture}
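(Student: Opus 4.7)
The plan is to split the conjectured equality into its two defining inequalities and argue them separately. For the upper bound $\chi(G \times H) \le \min\{\chi(G), \chi(H)\}$, I would observe that if $c : V(G) \to [k]$ is a proper $k$-coloring of $G$, then the map $(g,h) \mapsto c(g)$ is a proper $k$-coloring of $G \times H$, since every edge of $G \times H$ projects by definition to an edge of $G$; the symmetric pullback from $H$ yields the other half. This side is essentially a one-line verification.

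The reverse inequality is the substantive content of the conjecture. I would attempt it by contradiction: set $k = \min\{\chi(G), \chi(H)\} - 1$, assume that a proper $k$-coloring $\phi$ of $G \times H$ exists, and try to distill from $\phi$ a proper $k$-coloring of $G$ or of $H$. Two standard routes suggest themselves. The first is a \emph{fibrewise majority} argument: for each $g \in V(G)$, look at how $\phi$ distributes the $k$ colors on the fiber $\{g\} \times V(H)$, pick a ``dominant'' color $\hat c(g)$, and try to verify that $\hat c$ is proper on $G$. The second is the \emph{exponential graph} reformulation: by the categorical adjunction $\mathrm{Hom}(G \times H, K_k) \cong \mathrm{Hom}(G, K_k^H)$, the statement $\chi(G \times H) \le k$ is equivalent to the existence of a homomorphism $G \to K_k^H$, so it suffices to prove $\chi(K_k^H) \ge k$ whenever $\chi(H) \ge k$. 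Essentially all known partial results proceed through this reformulation.

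The main obstacle, and the reason the conjecture has stood open for more than forty years, is that there is no transparent mechanism for separating the contributions of the two factors: the color assigned to $(g,h)$ genuinely depends on both coordinates, and the combinatorial structure of $K_k^H$ becomes opaque already for $k = 5$. I therefore expect the majority argument to lose control whenever no single color dominates any fiber, and the exponential-graph argument to bog down in the chromatic complexity of $K_k^H$ for $k \ge 5$. My honest proposal is accordingly to prove the upper bound rigorously, settle the small cases $k \le 4$ along the lines of El-Zahar and Sauer, and acknowledge the general statement as the well-known open problem it remains.
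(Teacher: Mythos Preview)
Your assessment is correct and aligns with the paper: this is stated there as Hedetniemi's \emph{conjecture}, and the paper offers no proof beyond the upper bound $\chi(G\times H)\le\min\{\chi(G),\chi(H)\}$, which it establishes via precisely your pullback-coloring argument $f'((g,h))=f(g)$. Your discussion of the reverse inequality (fiberwise majority, the exponential-graph reformulation $K_k^H$, the El-Zahar--Sauer result for $k\le4$) goes well beyond the paper's brief remarks, and your conclusion that the general lower bound remains open is exactly the paper's position as well. For completeness: subsequent to this paper, Shitov (2019) exhibited counterexamples, so the conjecture is now known to be false in general---your decision not to attempt a full proof was the right one.
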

It is easy to see that the right-hand side is an upperbound. 
Namely, if $f$ is a vertex coloring of $G$ then one can color 
$G \times H$ by defining a coloring $f^{\prime}$ 
as follows 
\[f^{\prime}((g,h))=f(g), 
\quad\text{for all $g \in V(G)$ and $h \in V(H)$.}\]  
Recently, it was shown that the fractional version of 
Hedetniemi's conjecture is true~\cite{kn:zhu2}. 

\bigskip 

When $G$ and $H$ are perfect then Hedetniemi's conjecture 
is true. Namely, let $K$ be a clique of cardinality 
at most 
\[|K| \leq \min \;\{\;\omega(G),\;\omega(H)\;\}.\] 
It is easy to check that $G \times H$ has a clique 
of cardinality $|K|$. One obtains an `elegant' proof 
via homomorphisms as follows.  
By assumption, there exist homomorphisms $K \rightarrow G$ and 
$K \rightarrow H$. 
This implies that there also is a 
homomorphism $K \rightarrow G \times H$ (see, eg,~\cite{kn:hahn,kn:hell}). 
(Actually, if $W$, $P$ and $Q$ are any graphs, 
then there exist homomorphisms $W \rightarrow P$ and 
$W \rightarrow Q$ if and only if there exists a 
homomorphism $W \rightarrow P \times Q$.)
In other words~\cite[Observation~5.1]{kn:hahn}, 
\[\omega(G \times H) \geq \min\;\{\;\omega(G),\;\omega(H)\;\}.\] 
Since $G$ and $H$ are perfect, $\omega(G)=\chi(G)$ and 
$\omega(H)=\chi(H)$. This proves the claim, since 
\begin{equation}
\begin{split}
\chi(G \times H) \geq \omega(G \times H) \geq 
& \min\;\{\;\omega(G),\;\omega(H)\;\} \\
= & \min\;\{\;\chi(G),\;\chi(H)\;\} 
\geq \chi(G \times H).
\end{split}
\end{equation}

\bigskip 

Much less is known about the independence 
number of $G \times H$. It is easy to see that 
\begin{equation}
\label{eqn0}
\alpha(G \times H) \geq \max\;\{\;\alpha(G) \cdot |V(H)|,\;
\alpha(H) \cdot |V(G)|\;\}.
\end{equation}
But this lowerbound can be arbitrarily bad, even for 
threshold graphs~\cite{kn:jha}. For any graph $G$ and any 
natural number $k$ there exists a threshold graph $H$ 
such that 
\[\alpha(G \times H) \geq k+L(G,H),\] 
where 
$L(G,H)$ is the lowerbound expressed in~\eqref{eqn0}. 
Zhang recently proved that, 
when $G$ and $H$ are vertex transitive then equality holds 
in~\eqref{eqn0}~\cite{kn:zhang}. Notice that, 
when $G$ is vertex transitive then $G^k$\footnote{Here we write 
$G^k$ for the $k$-fold product $G \times \dots \times G$.} is also 
vertex transitive and so, 
by the ``no-homomorphism'' lemma of Albertson 
and Collins,  
$\alpha(G^k)=\alpha(G) \cdot n^{k-1}$.  

\bigskip 

\begin{definition}
A graph is a cograph if it has no induced 
$P_4$, ie, a path with four vertices. 
\end{definition}
Cographs are characterized by the property that 
every induced subgraph $H$ satisfies one of 
\begin{enumerate}[\rm (a)]
\item $H$ has only one vertex, or  
\item $H$ is disconnected, or 
\item $\Bar{H}$ is disconnected. 
\end{enumerate}
It follows that cographs can be represented 
by a cotree. This is pair $(T,f)$ where $T$ is a 
rooted tree and $f$ is a 1-1 map from the vertices 
of $G$ to the leaves of $T$. Each internal node of $T$, 
including the root,  
is labeled as $\otimes$ or $\oplus$. When the label 
is $\oplus$ then the subgraph $H$, induced by the vertices 
in the leaves, is disconnected. Each child of the 
node represents one component. When the node is labeled 
as $\otimes$ then the complement of the induced 
subgraph $H$ is disconnected. In that case, each component 
of the complement is represented by one child of the node. 

When $G$ is a cograph then a cotree for $G$ can be obtained 
in linear time. 

\bigskip 

Cographs are perfect, see, eg,~\cite[Section~3.3]{kn:kloks2}. 
When $G$ and $H$ are cographs then $G \times H$ is not 
necessarily perfect. For example, when $G$ is the paw, 
ie, $G \simeq K_1 \otimes (K_2 \oplus K_1)$ then $G \times K_3$ 
contains an induced $C_5$~\cite{kn:ravindra}. Ravindra and 
Parthasarathy characterize the pairs $G$ and $H$ for which 
$G \times H$ is perfect~\cite[Theorem~3.2]{kn:ravindra}. 

\section{Independence in categorical products of cographs}

It is well-known that $G \times H$ is connected if and only 
if both $G$ and $H$ are connected and at least one of them is not 
bipartite~\cite{kn:weichsel}. When $G$ and $H$ are connected 
and bipartite, then $G \times H$ consists of two components. 
In that case, two vertices $(g_1,h_1)$ and $(g_2,h_2)$ 
belong to the same component if the distances $d_G(g_1,g_2)$ 
and $d_H(h_1,h_2)$ have the same parity.  

\bigskip 

\begin{definition}
The rook's graph $R(m,n)$ is the linegraph of 
the complete bipartite graph $K_{m,n}$. 
\end{definition}
The rook's graph $R(m,n)$ has as its vertices 
the vertices of the grid, 
$(i,j)$, with $1 \leq i \leq m$ and $1 \leq j \leq n$. 
Two vertices are adjacent if they are in the same 
row or column of the grid. 
The rook's graph is perfect, since all linegraphs 
of bipartite graphs are perfect (see, eg,~\cite{kn:kloks2}). 
By the perfect graph theorem, also the complement 
of rook's graph is perfect. 

\begin{proposition}
\label{lm rook}
Let $m,n \in \mathbb{N}$. Then 
\[K_m \times K_n \simeq \Bar{R},\]
where $\Bar{R}$ is the complement of the 
rook's graph $R=R(m,n)$.
\end{proposition}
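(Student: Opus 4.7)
The plan is to exhibit the identity map on $\{(i,j) : 1 \le i \le m,\ 1 \le j \le n\}$ as the isomorphism, and verify that adjacency in $K_m \times K_n$ corresponds exactly to non-adjacency in $R(m,n)$. Both graphs have the same vertex set by construction, so the only task is to compare edge sets.

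First I would unwind the definition of the categorical product: two distinct vertices $(i_1,j_1)$ and $(i_2,j_2)$ are adjacent in $K_m \times K_n$ iff $\{i_1,i_2\} \in E(K_m)$ and $\{j_1,j_2\} \in E(K_n)$, which, since $K_m$ and $K_n$ are complete, amounts to $i_1 \neq i_2$ and $j_1 \neq j_2$. Next I would unwind the definition of the rook's graph: two distinct vertices $(i_1,j_1)$ and $(i_2,j_2)$ are adjacent in $R(m,n)$ iff they lie in a common row or column, that is, iff $i_1 = i_2$ or $j_1 = j_2$. Taking the contrapositive, non-adjacency in $R(m,n)$ between distinct vertices is exactly the condition $i_1 \neq i_2$ and $j_1 \neq j_2$.

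Comparing the two conditions shows that the identity map is an isomorphism from $K_m \times K_n$ onto $\overline{R(m,n)}$. There is really no obstacle here; the proposition is essentially a restatement of two definitions, and the only care needed is in the bookkeeping of the "distinct vertices" clause (categorical products of simple graphs have no loops, so the case $(i_1,j_1) = (i_2,j_2)$ produces a non-edge in both graphs and need not be considered separately).
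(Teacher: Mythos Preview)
Your proof is correct. The paper actually states this proposition without proof, treating it as an immediate consequence of the definitions; your argument is exactly the routine verification one would supply, so there is nothing to compare.
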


\begin{lemma}
\label{lm cmp}
Let $G$ and $H$ be complete multipartite. 
Then $G \times H$ is perfect. 
\end{lemma}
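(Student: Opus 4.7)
My plan is to exhibit $G \times H$ as a vertex-substitution of the complement of a rook's graph, and then to invoke Lov\'asz's substitution theorem together with the fact, recorded just above Proposition~\ref{lm rook}, that $\Bar{R}$ is perfect.

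Write $G = K_{a_1,\ldots,a_p}$ with color classes $A_1,\ldots,A_p$ of sizes $a_i$, and $H = K_{b_1,\ldots,b_q}$ with color classes $B_1,\ldots,B_q$ of sizes $b_j$. I would first partition $V(G) \times V(H)$ into the $pq$ blocks $A_i \times B_j$. A direct check from the definition of the categorical product shows that (i) each block is an independent set in $G \times H$, because any two vertices in $A_i \times B_j$ agree in their $G$-class and in their $H$-class; (ii) if two blocks $A_{i_1} \times B_{j_1}$ and $A_{i_2} \times B_{j_2}$ share a coordinate, i.e.\ $i_1=i_2$ or $j_1=j_2$, then there are no edges between them, since the corresponding coordinates then fail to be adjacent in $G$ or in $H$; and (iii) if they share neither coordinate, every pair of vertices across the two blocks is adjacent. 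Consequently $G \times H$ is obtained from $\Bar{R}(p,q)$ by replacing each vertex $(i,j)$ with an independent set of size $a_i b_j$, with full bipartite joins along the edges of $\Bar{R}(p,q)$ and no edges across the non-edges.

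By the discussion preceding Proposition~\ref{lm rook}, $\Bar{R}(p,q)$ is perfect. Lov\'asz's substitution theorem states that substituting a perfect graph for each vertex of a perfect graph produces a perfect graph; specialising to independent sets (which are trivially perfect) yields perfection of $G \times H$. The only genuinely non-routine step is the block-structure identification in the middle paragraph; once that is in hand, the result is a quotation of two standard facts. An alternative route would invoke the strong perfect graph theorem and rule out induced odd holes and antiholes in $G \times H$ directly, but the substitution argument sidesteps any such case analysis.
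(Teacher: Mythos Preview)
Your argument is correct, and it takes a genuinely different route from the paper. The paper quotes the Ravindra--Parthasarathy characterization of perfect categorical products (either one factor is bipartite, or neither factor contains an odd hole nor an induced paw) and then verifies that complete multipartite graphs satisfy the second alternative: they are perfect, hence odd-hole-free, and their complements are $P_3$-free, hence paw-free. Your proof instead uncovers the explicit block structure of $G\times H$---precisely the ``generalized rook's graph'' that the paper goes on to describe \emph{after} the lemma---and observes that this structure is nothing other than $\Bar{R}(p,q)$ with independent sets substituted at each vertex, so that Lov\'asz's substitution theorem closes the argument.

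The trade-offs are these. The paper's route is a one-line application of a tailor-made theorem, but that theorem is itself a nontrivial external result. Your route is more self-contained: the substitution lemma is classical and the perfection of $\Bar{R}(p,q)$ was already established in the text, so the only work is the (easy) verification of the block adjacencies. Your approach also has the pleasant feature that it derives the perfection directly from the very structure the paper needs anyway for the subsequent computation of $\alpha(G\times H)$, so it integrates more tightly with the surrounding material.
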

\begin{proof}
Ravindra and Parthasarathy prove that $G \times H$ is 
perfect if and only if either 
\begin{enumerate}[\rm (a)]
\item $G$ or $H$ is bipartite, or 
\item Neither $G$ nor $H$ contains an induced odd 
cycle of length at least 5 nor an induced paw. 
\end{enumerate}
Since $G$ and $H$ are perfect, they do not contain 
an odd hole. 
Furthermore, the complement of $G$ and $H$ is a 
union of cliques, and so the complements are $P_3$-free. 
The complement of a paw is $K_1 \oplus P_3$ and so 
it has an induced $P_3$. This proves the claim. 
\qed\end{proof}

\bigskip 

Let $G$ and $H$ be complete multipartite. Let $G$ be the 
join of $m$ independent sets, say with $p_1, \dots, p_m$ vertices, 
and let $H$ be the join of 
$n$ independent sets, say with $q_1,\dots,q_n$ vertices.  
We shortly describe how $G \times H$ is obtained from 
the complement of the rook's graph $R(m,n)$. We call the 
structure a generalized rook's graph.  

\bigskip 

Each vertex $(i,j)$ in $R(m,n)$ is replaced by an 
independent set $I(i,j)$   
of cardinality $p_i \cdot q_j$. Denote the vertices 
of this independent set as 
\[(i_s,j_t) \quad\text{where $1 \leq s \leq p_i$ and 
$1 \leq t \leq q_j$.}\] 
Two vertices $(i_s,j_t)$ and $(i^{\prime}_s,j_t)$ 
are adjacent and these types of row- and column-adjacencies 
are the only adjacencies in this generalized rook's graph. 
The graph $G \times H$ is obtained from the partial complement 
of the generalized rook's graph. 

\bigskip 

\begin{lemma}
Let $G$ and $H$ be complete multipartite graphs. 
Then 
\begin{equation}
\label{eqn1}
\alpha(G \times H) = \kappa(G \times H)=
\max\;\{\;\alpha(G) \cdot |V(H)|,\;\alpha(H) \cdot |V(G)|\;\}.
\end{equation}
\end{lemma}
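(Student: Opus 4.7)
My plan is to combine the perfection of $G\times H$ with a direct structural analysis of independent sets in the generalized rook's graph described just before the lemma. The equality $\alpha(G\times H)=\kappa(G\times H)$ is automatic: by Lemma~\ref{lm cmp}, $G\times H$ is perfect, and for a perfect graph the clique cover number equals the independence number (apply $\omega=\chi$ in the complement, using the perfect graph theorem). The lower bound $\alpha(G\times H)\ge\max\{\alpha(G)\cdot|V(H)|,\alpha(H)\cdot|V(G)|\}$ is just~\eqref{eqn0}, so the substance of~\eqref{eqn1} is the matching upper bound.

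For the upper bound, I would work with the partition of $V(G\times H)$ into the cells $P_i\times Q_j$, where $P_1,\dots,P_m$ are the independent sets whose join is $G$ and $Q_1,\dots,Q_n$ are those of $H$. Directly from the definition of the categorical product, each cell is independent in $G\times H$ (two same-part vertices give a non-edge in $G$), while two cells $(i,j)$ and $(i',j')$ with $i\neq i'$ and $j\neq j'$ are completely joined (both coordinates lie in different parts, so every cross pair is adjacent). Hence, if $S$ is an independent set of $G\times H$ and $C(S)$ denotes the set of cells it meets, every two cells of $C(S)$ must share a row or a column of the $m\times n$ grid.

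The crux is then the following cell-geometry claim: any family of cells in which every two share a row or a column is contained in a single row or a single column. I would prove this by inspecting three cells: given $(i,j)$ and $(i,j')$ with $j\neq j'$, a third cell sharing a row or column with each of them must have row index $i$, since placing it in column $j$ or column $j'$ forces its row index to agree with the other of $j,j'$, an impossibility. Iterating this observation pins $C(S)$ to one row $i$ or one column $j$, so $S\subseteq P_i\times V(H)$ or $S\subseteq V(G)\times Q_j$; maximizing yields $|S|\le\max\{\max_i p_i\cdot|V(H)|,\max_j q_j\cdot|V(G)|\}$, which is exactly $\max\{\alpha(G)\cdot|V(H)|,\alpha(H)\cdot|V(G)|\}$ since $\alpha(G)=\max_i p_i$ and $\alpha(H)=\max_j q_j$.

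The only real obstacle is the cell-geometry claim above; everything else is bookkeeping. Once it is in place, the maximal cliques of the cell-compatibility graph are precisely the rows and columns of the $m\times n$ grid, so no independent set can exceed the bound already realized by the two independent sets that witness~\eqref{eqn0}.
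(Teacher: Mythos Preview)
Your proposal is correct and follows essentially the same route as the paper's proof: both use Lemma~\ref{lm cmp} to get $\alpha=\kappa$, invoke~\eqref{eqn0} for the lower bound, and obtain the upper bound by observing that the cells $P_i\times Q_j$ behave like vertices of the complement of the rook's graph $R(m,n)$, so any independent set of $G\times H$ is confined to a single generalized row or column. The paper simply asserts the key fact that an independent set in $\overline{R(m,n)}$ lies in one row or one column, whereas you spell out the three-cell argument; your version is just a more explicit rendering of the same idea. (One minor wording slip: in your three-cell step the phrase ``forces its row index to agree with the other of $j,j'$'' should read that its \emph{column} index would have to equal both $j$ and $j'$, which is the contradiction you intend.)
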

\begin{proof}
Two vertices $(g_1,h_1)$ and $(g_2,h_2)$ are 
adjacent if $g_1$ and $g_2$ are not in a common 
independent set in $G$ and $h_1$ and $h_2$ are 
not in a common independent set in $H$. 

\medskip 

\noindent
Let $\Omega$ be a maximum independent set of $G$. 
Then 
\[\{\;(g,h)\;|\; g \in \Omega \quad\text{and}\quad h \in V(H)\;\}\] 
is an independent set in $G \times H$. 
We show that all maximal independent sets are of this form 
or of the symmetric form with $G$ and $H$ interchanged. 

\medskip 

\noindent
Consider the complement of the 
rook's graph. Any independent set 
must have all its vertices in one row or in 
one column. 
This shows that every maximal independent set in $G \times H$ 
is a generalized row or column in the rook's graph. 
Since the graphs are perfect, the number of cliques 
in a clique cover of $G \times H$ equals $\alpha(G \times H)$. 
\qed\end{proof}

\bigskip 

\begin{remark}
Notice that complete multipartite graphs are  
not vertex transitive, unless all independent sets 
have the same cardinality. 
\end{remark}

\begin{proposition}
\label{lm union}
Let $G$ and $H$ be cographs and assume that 
$G$ is disconnected. Say that $G=G_1 \oplus G_2$. 
Then 
\[\alpha(G \times H)=\alpha(G_1 \times H)+\alpha(G_2 \times H).\]
\end{proposition}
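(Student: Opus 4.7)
The plan is to observe that the categorical product distributes over disjoint union on the first coordinate, so that $G \times H$ literally decomposes into two vertex-disjoint pieces that share no edges. Once that structural fact is in place, the identity for $\alpha$ is just additivity of the independence number over connected (or merely disjoint) components.

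First I would recall the definition of adjacency in $G \times H$: two vertices $(g_1,h_1)$ and $(g_2,h_2)$ are joined exactly when $\{g_1,g_2\} \in E(G)$ and $\{h_1,h_2\} \in E(H)$. Since $G = G_1 \oplus G_2$ has no edges between $V(G_1)$ and $V(G_2)$, the same holds in $G \times H$ between $V(G_1) \times V(H)$ and $V(G_2) \times V(H)$. On the other hand, the induced subgraph of $G \times H$ on $V(G_i) \times V(H)$ is, by inspection of the adjacency rule, exactly $G_i \times H$ for $i=1,2$. Hence
\[
G \times H \;=\; (G_1 \times H)\;\oplus\;(G_2 \times H).
\]

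Second, since the independence number is additive over the components of a disjoint union, I would conclude
\[
\alpha(G \times H) \;=\; \alpha(G_1 \times H) + \alpha(G_2 \times H),
\]
by taking maximum independent sets in each side and unioning them (for $\geq$), and by splitting any independent set in $G \times H$ along the partition $V(G_1) \cup V(G_2)$ and noting the two parts are independent in the respective products (for $\leq$).

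There is essentially no obstacle here; the only thing worth flagging is that the statement does not actually require $G$ or $H$ to be cographs, only that $G$ be expressible as a disjoint union. The cograph hypothesis matters for the later recursive use of this proposition via the cotree, not for this identity itself.
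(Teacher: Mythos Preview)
Your proof is correct and is exactly the standard argument; the paper in fact states this proposition without proof, treating the distributivity $(G_1 \oplus G_2)\times H = (G_1 \times H)\oplus(G_2 \times H)$ and the additivity of $\alpha$ over disjoint unions as evident. Your remark that the cograph hypothesis is irrelevant for this particular identity is also correct.
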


\begin{lemma}
\label{lm join}
Let $G$ and $H$ be connected cographs. 
Say $G=G_1 \otimes G_2$ and $H=H_1 \otimes H_2$. 
Then 
\[\alpha(G \times H)=\min \;\{\;\alpha(G_1 \times H),\;
\alpha(G_2 \times H),\;\alpha(G \times H_1),\;\alpha(G \times H_2)\;\}.\] 
\end{lemma}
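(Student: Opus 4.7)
My plan is to prove the two inequalities separately. For the bound $\min\{\alpha(G_1\times H),\alpha(G_2\times H),\alpha(G\times H_1),\alpha(G\times H_2)\} \le \alpha(G\times H)$, I would observe that each of $G_1,G_2$ is an induced subgraph of $G=G_1\otimes G_2$ and likewise each of $H_1,H_2$ is an induced subgraph of $H=H_1\otimes H_2$. Consequently, each of the four products on the left is an induced subgraph of $G\times H$, so any independent set in one of them is an independent set of $G\times H$; taking the minimum preserves the bound.

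For the reverse direction I would take a maximum independent set $S\subseteq V(G)\times V(H)$ and use the join structure of both factors to locate $S$ inside a subproduct. Partition $S=S_1\sqcup S_2$ by whether the first coordinate lies in $V(G_1)$ or $V(G_2)$. Whenever $(g_1,h_1)\in S_1$ and $(g_2,h_2)\in S_2$, the join $G=G_1\otimes G_2$ guarantees $\{g_1,g_2\}\in E(G)$, so independence in $G\times H$ forces $\{h_1,h_2\}\notin E(H)$. Consequently the projections $T_i=\pi_H(S_i)$ have no edges between them in $H$. Applying the join $H=H_1\otimes H_2$, this non-adjacency is possible only if $T_1\cup T_2$ lies inside a common side of the partition $V(H)=V(H_1)\sqcup V(H_2)$, or one of $S_1,S_2$ is empty. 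In each alternative, $S$ embeds as an independent set into one of the four subproducts $G_i\times H$ or $G\times H_j$, giving $|S|\le \alpha(X)$ for that particular $X$.

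The main obstacle is that this case analysis produces the bound for only one of the four products, namely the one that matches the structural location of $S$, whereas the statement calls for a bound by the minimum of all four. To close this gap I plan to iterate the argument along the cotrees of $G$ and $H$: once $S$ is known to lie, say, inside $V(G_1)\times V(H)$, I would re-apply the split to the cotree-decomposition of $G_1$ combined with $H=H_1\otimes H_2$, and symmetrically swap the roles of $G$ and $H$ to re-run the case analysis starting from the $H$-side. Together with Proposition~\ref{lm union} whenever a disconnected subfactor appears in the recursion, and with the closed-form description from Lemma~\ref{lm cmp} at the leaves where complete multipartite factors arise, this propagation is the mechanism by which the same bound should transfer to each of $\alpha(G_1\times H),\alpha(G_2\times H),\alpha(G\times H_1),\alpha(G\times H_2)$ simultaneously, yielding the minimum. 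Verifying the consistency of this propagation is the delicate step I expect to require the most care.
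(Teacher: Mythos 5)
Your first two paragraphs reproduce, in projection language, exactly the paper's one-sentence proof (that $V(G_1)\times V(H_1)$ is completely joined to $V(G_2)\times V(H_2)$, and $V(G_1)\times V(H_2)$ to $V(G_2)\times V(H_1)$), and they are correct. Note, however, what they actually prove: your induced-subgraph observation gives $\alpha(G\times H)\ge\alpha(X)$ for \emph{each} of the four products $X$, i.e.\ $\alpha(G\times H)\ge\max\{\cdots\}$, and your case analysis locates every independent set of $G\times H$ inside one of the four products, i.e.\ $\alpha(G\times H)\le\max\{\cdots\}$. So you already have a complete proof of
\[\alpha(G \times H)=\max \;\{\;\alpha(G_1 \times H),\;\alpha(G_2 \times H),\;\alpha(G \times H_1),\;\alpha(G \times H_2)\;\},\]
which is what the lemma should say.

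The third paragraph is where the proposal breaks: the ``gap'' you are trying to close cannot be closed, because the statement with $\min$ is false. Take $G=K_3=K_1\otimes K_2$ and $H=K_2=K_1\otimes K_1$. Then $G\times H=K_3\times K_2\simeq C_6$, so $\alpha(G\times H)=3$, whereas $\alpha(G_2\times H)=\alpha(K_2\times K_2)=2$ ($K_2\times K_2$ is a perfect matching on four vertices); the four values are $2,2,3,3$, and their minimum is $2\ne 3$. Indeed, since each of the four products is an induced subgraph of $G\times H$, equality with the minimum could only ever hold when all four values coincide. No iteration along the cotrees, and no appeal to Proposition~\ref{lm union} or Lemma~\ref{lm cmp}, can change this. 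Drop the third paragraph, read $\max$ for $\min$ in the statement, and your argument coincides with the paper's.
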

\begin{proof}
Every vertex of $V(G_1) \times V(H_1)$ is adjacent to 
every vertex of $V(G_2) \times V(H_2)$ and, likewise, 
every vertex of $V(G_1) \times V(H_2)$ is adjacent to 
every vertex of $V(G_2) \times V(H_1)$. 
This proves the claim. 
\qed\end{proof}

\begin{theorem}
\label{thm cographs}
There exists an $O(n^2)$ algorithm 
which computes $\alpha(G \times H)$ when $G$ and $H$ 
are cographs. 
\end{theorem}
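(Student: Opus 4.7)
The plan is to construct the cotrees of $G$ and $H$ in linear time and then fill a dynamic programming table
\[
T[u,v]=\alpha(G_u\times H_v),
\]
indexed by pairs of nodes $u$ of the cotree of $G$ and $v$ of the cotree of $H$, where $G_u$ and $H_v$ denote the cographs induced by the leaves under $u$ and $v$, respectively. I would first precompute the sizes $|V(G_u)|$ and $|V(H_v)|$ by a linear-time pass over each cotree, then fill $T$ bottom-up in a postorder on the product of the two cotrees.

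The recurrence has three cases. If $u$ is a leaf of the cotree of $G$, then $G_u\simeq K_1$ has no edges, so $G_u\times H_v$ has no edges either and $T[u,v]=|V(H_v)|$; the symmetric rule is used when $v$ is a leaf. If $u$ is a $\oplus$-node with children $u_1,\dots,u_r$, iterating Proposition~\ref{lm union} gives
\[
T[u,v]=\sum_{i=1}^{r} T[u_i,v],
\]
and the mirror rule applies when $v$ is a $\oplus$-node. Finally, if $u$ and $v$ are both $\otimes$-nodes, then $G_u$ and $H_v$ are connected cographs with nontrivial join decompositions, so Lemma~\ref{lm join} yields
\[
T[u,v]=\min_{i,j}\bigl\{\,T[u_i,v],\;T[u,v_j]\,\bigr\},
\]
the minimum ranging over the children $u_i$ of $u$ and $v_j$ of $v$. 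Since cotree labels alternate between $\oplus$ and $\otimes$ along any root-to-leaf path, every pair $(u,v)$ falls into one of these cases and the right-hand side of each rule refers only to entries strictly earlier in the postorder.

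For the running time, each cotree has $O(n)$ nodes, so the table has $O(n^2)$ entries, and the cost of entry $(u,v)$ is $O(\deg u + \deg v)$. Because the sum of degrees in each cotree is $O(n)$, a standard double-counting argument gives
\[
\sum_{u,v}\bigl(\deg u + \deg v\bigr)=O(n^2),
\]
which matches the bound claimed in the theorem.

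The one subtlety I would watch for is the order in which the cases are applied. Lemma~\ref{lm join} requires both $G_u$ and $H_v$ to be connected, so the join rule may be invoked \emph{only} when both $u$ and $v$ are $\otimes$-nodes; in a mixed configuration such as $u$ being $\otimes$ and $v$ being $\oplus$, one must first apply the $\oplus$-rule on the disconnected side. Giving the $\oplus$-rule priority whenever either root is $\oplus$ makes the recursion unambiguous. This is really the main conceptual obstacle; everything else is routine bookkeeping over the two cotrees.
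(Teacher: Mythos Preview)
Your proposal is correct and is precisely the natural elaboration of the paper's own proof, which consists of the single sentence ``the proof follows easily from Proposition~\ref{lm union} and Lemma~\ref{lm join}.'' The dynamic program over pairs of cotree nodes, the priority of the $\oplus$-rule over the join rule, and the $O(n^2)$ accounting via $\sum_{u,v}(\deg u+\deg v)$ are exactly what the paper leaves implicit.
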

\begin{proof}
The proof follows easily from 
Proposition~\ref{lm union} and Lemma~\ref{lm join}.
\qed\end{proof}

\begin{remark}
It seems not easy to extend the result of Theorem~\ref{thm cographs} 
to higher dimensions. 
It would be interesting to know whether $\alpha(G^k)$, 
for $k \in \mathbb{N}$,  
is computable in polynomial time when $G$ is a cograph. 
Even for $k=3$ we have no answer.
\end{remark}
 
\section{Splitgraphs}

F\"oldes and Hammer introduced splitgraphs~\cite{kn:foldes}. 
We refer to~\cite[Chapter~6]{kn:golumbic} and~\cite{kn:merris} for 
some background information on this class of graphs. 

\begin{definition}
A graph $G$ is a splitgraph if there is a partition 
$\{S,C\}$ of its vertices such that $G[C]$ is a clique 
and $G[S]$ is an independent set. 
\end{definition}

\begin{theorem}
Let $G$ and $H$ be splitgraphs. There exists a 
polynomial-time algorithm to compute the independence 
number of $G \times H$. 
\end{theorem}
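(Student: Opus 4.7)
The plan is to exploit the split decomposition to partition $V(G\times H)$ into four blocks with very restricted edge structure, enumerate a polynomial number of skeletons for $I\cap(C_G\times C_H)$, and reduce each resulting subproblem to bipartite maximum matching via K\"onig's theorem.

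Write $V(G)=C_G\cup S_G$ and $V(H)=C_H\cup S_H$, and partition $V(G\times H)=A\cup B\cup C\cup D$ where $A=C_G\times C_H$, $B=C_G\times S_H$, $C=S_G\times C_H$, $D=S_G\times S_H$. A direct check from the definition of the categorical product shows that each of $B$, $C$, $D$ is an independent set (any two vertices share a coordinate lying in the independent side of $G$ or $H$), that $D$ has no edges to $B\cup C$, and that the only remaining cross-edges among $B$, $C$, $D$ are the $B$--$C$ edges, which form a bipartite subgraph. By Proposition~\ref{lm rook}, $A$ induces $\overline{R(|C_G|,|C_H|)}$, whose independent sets are exactly the subsets of a single row or column. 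Hence for any independent set $I$ in $G\times H$, $I\cap A$ is contained in some $T\times\{c'_0\}$ with $T\subseteq C_G$ and $c'_0\in C_H$, or in some $\{c_0\}\times T'$ with $c_0\in C_G$ and $T'\subseteq C_H$.

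I would enumerate the $O(|C_G|+|C_H|)$ choices of anchor $c'_0\in C_H$ or $c_0\in C_G$ (plus the empty anchor), and by symmetry treat only the anchor $c'_0\in C_H$ case. Set $Y=N_H(c'_0)\cap S_H$ and introduce binary variables $x_c$ for $c\in C_G$, $y_{(c,s')}$ for $(c,s')\in B$, $z_{(s,c'')}$ for $(s,c'')\in C$, $w_{(s,s')}$ for $(s,s')\in D$, indicating membership in $I$ (with $x_c=1$ encoding $c\in T$). In the regime $|T|\ge 2$ the $A$--$B$ adjacencies force $y_{(c,s')}=0$ whenever $s'\in Y$, so I restrict the $y$-variables to $s'\notin Y$. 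The remaining pairwise conflicts are then $x_c$--$z_{(s,c'')}$ for $s\in N_G(c)\cap S_G$ and $c''\ne c'_0$, $x_c$--$w_{(s,s')}$ for $s\in N_G(c)\cap S_G$ and $s'\in Y$, and $y_{(c,s')}$--$z_{(s,c'')}$ for the $B$--$C$ product edges.

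The key observation I plan to establish is that every one of these conflicts crosses the bipartition $\{x,y\}$ versus $\{z,w\}$: there are no edges within the chosen ``column'' of $A$ (since $c'_0 c'_0\notin E(H)$), no edges within any of $B$, $C$, $D$, and no $B$--$D$ or $C$--$D$ edges. Thus the conflict graph is bipartite, and its maximum independent set --- which equals the optimum of the $|T|\ge 2$ subproblem --- is computable in polynomial time by K\"onig's theorem. The degenerate subcases are handled directly: $|T|=0$ simply yields $|D|$ plus the maximum independent set of the bipartite $B$--$C$ subgraph, and $|T|=1$ is handled by enumerating the $|C_G|\cdot|C_H|$ singletons $(c_*,c'_0)\in A$, each leaving a bipartite residual after removing the forbidden neighborhood. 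The symmetric anchor case $c_0\in C_G$ is analogous, and returning the maximum over all cases yields $\alpha(G\times H)$ in polynomial time. The main obstacle I anticipate is the bipartiteness verification in the $|T|\ge 2$ step, which rests on suppressing the $A$--$B$ conflicts via the $s'\notin Y$ restriction together with the sparse block-edge structure forced by the splits.
\qed
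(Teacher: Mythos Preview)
Your proposal is correct and follows essentially the same approach as the paper. Both arguments partition $V(G\times H)$ into the four blocks $C_G\times C_H$, $C_G\times S_H$, $S_G\times C_H$, $S_G\times S_H$, observe that the last three together induce a bipartite graph while the first is a complement of a rook's graph, and then split into the three cases $|I\cap(C_G\times C_H)|\in\{0,1,\ge 2\}$; in the third case both enumerate the $O(|C_G|+|C_H|)$ row/column anchors, use the $|T|\ge 2$ hypothesis to prune one of the two ``mixed'' blocks, and reduce to a bipartite maximum independent set. Your bipartition $\{x,y\}$ versus $\{z,w\}$ is the mirror image of the paper's (you anchor in $C_H$ and prune $B$, the paper anchors in $C_G$ and prunes $C$), but since both anchor types are enumerated this is the same argument.
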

\begin{proof}
Let $\{S_1,C_1\}$ and $\{S_2,C_2\}$ be the 
partition of $V(G)$ and $V(H)$, respectively, 
into independent sets and cliques. Let 
$c_i=|C_i|$ and $s_i=|S_i|$ for $i \in \{1,2\}$.  
The vertices of $C_1 \times C_2$ form a rook's graph. 

\medskip 

\noindent
We consider three cases. First consider  
the maximum independent 
sets without any vertex of $V(C_1) \times V(C_2)$. 
Notice that the subgraph of $G \times H$ induced 
by the vertices of 
\[V(S_1) \times V(C_2) \cup V(C_1) \times V(S_2) \cup V(S_1) \times V(S_2)\] 
is bipartite. A maximum independent set in a bipartite graph can be 
computed in polynomial time. 

\medskip 

\noindent
Consider maximum independent sets that contain exactly one 
vertex $(c_1,c_2)$ of $V(C_1) \times V(C_2)$. The maximum 
independent set of this type can be computed as follows. 
Consider the bipartite graph of 
the previous case and remove the neighbors of $(c_1,c_2)$ 
from this graph. 
The remaining graph is bipartite. Maximizing over all 
pairs $(c_1,c_2)$ gives the maximum independent set of this 
type. 

\medskip 

\noindent
Consider maximum independent sets that contain at least 
two vertices of the rook's graph $V(C_1) \times V(C_2)$. 
Then the two vertices must be in one row or in one 
column of the grid, since otherwise they are adjacent. 
Let the vertices of the independent set be contained 
in row $c_1 \in V(C_1)$. Then the vertices of 
$V(S_1) \times V(C_2)$ of the independent set 
are contained in 
\[W=\{\;(s_1,c_2)\;|\; s_1 \notin N_G(c_1) \quad\text{and}\quad 
c_2 \in C_2\;\}.\] 
Consider the bipartite graph with one color class 
defined as the following set of vertices 
\[\{\;(c_i,s_2)\;|\; c_i \in C_1 \;\text{and}\; s_2 \in S_2\;\}
\cup \{\;(s_1,s_2)\;|\; 
s_1 \in V(S_1) \;\text{and}\; s_2 \in V(S_2)\;\},\]   
and the other color class defined as  
\[W \cup \{\;(c_1,c_2)\;|\; c_2 \in C_2\;\}.\]  
Since this graph is bipartite, the maximum independent set 
of this type can be computed in polynomial time by maximizing 
over the rows $c_1 \in C_1$ and columns $c_2 \in C_2$. 

\medskip 

\noindent
This proves the theorem. 
\qed\end{proof}

\section{Tensor capacity}

In this section we consider the powers of a graph 
under the categorical product. 

\bigskip 

\begin{definition}
The independence ratio of a graph $G$ is 
defined as 
\begin{equation}
\label{eqn2}
r(G) = \frac{\alpha(G)}{|V(G)|}.
\end{equation}
\end{definition}
For background information on the related Hall-ratio we 
refer to~\cite{kn:simonyi,kn:toth5}. 

\bigskip 

By~\eqref{eqn0} for any two graphs $G$ and $H$ we 
have 
\begin{equation}
\label{eqn3}
r(G \times H) \geq \max\;\{\;r(G),\;r(H)\;\}.
\end{equation}
It follows that $r(G^k)$ is non-decreasing. Also, it is bounded 
from above by 1 and so the limit when $k \rightarrow \infty$ exists. 
This limit was introduced in~\cite{kn:brown} as the 
`ultimate categorical independence ratio.' See 
also~\cite{kn:alon,kn:hahn2,kn:hell2,kn:lubetzky}. 
For simplicity we call it the 
tensor capacity of a graph. 
Alon and Lubetzky, and also T\'oth claim that computing the 
tensor capacity is NP-complete but, unfortunately neither 
provides a proof~\cite{kn:alon,kn:lubetzky,kn:toth2}. 

\begin{definition}
Let $G$ be a graph. The \underline{tensor capacity} of $G$ is 
\begin{equation}
\label{eqn4}
\boxed{\Theta^T(G) = \lim_{k \rightarrow \infty} r(G^k).} 
\end{equation}
\end{definition}

\bigskip 

Hahn, Hell and Poljak prove that for the {\em Cartesian product\/},  
\[\frac{1}{\chi(G)} \leq \lim_{k \rightarrow \infty} r(\Box^k G) 
\leq \frac{1}{\chi_f(G)},\] 
where $\chi_f(G)$ is the fractional chromatic number of $G$~\cite{kn:hahn2}. 
This shows that it is computable in polynomial time 
for graphs that satisfy $\omega(G)=\chi(G)$. 

\bigskip 

Brown et al.~\cite[Theorem 3.3]{kn:brown} 
obtain the following lowerbound for the 
tensor capacity. 
\begin{equation}
\label{eqn5}
\Theta^T(G) \geq a(G) 
\quad\text{where}\quad a(G)=\max_{\text{$I$ an independent set}}\; 
\frac{|I|}{|I| + |N(I)|}. 
\end{equation}
It is related to the binding 
number $b(G)$ of the graph $G$. Actually, the binding number 
is less than 1 if and only if $a(G) > \frac{1}{2}$. 
In that case, the binding number is realized by an independent set 
and it is equal to $b(G)=\frac{1-a(G)}{a(G)}$~\cite{kn:kloks,kn:toth2}. 
The binding number is computable in 
polynomial time~\cite{kn:cunningham,kn:kloks,kn:woodall}. 
See also Corollary~\ref{cor pol a^T} below.    

\bigskip 

The following proposition was proved in~\cite{kn:brown}. 

\begin{proposition}
If $r(G) > \frac{1}{2}$ then $\Theta^T(G)=1$. 
\end{proposition}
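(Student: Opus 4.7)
The plan is to combine the lower bound \eqref{eqn5} with the invariance $\Theta^T(G) = \Theta^T(G^k)$ for every $k \geq 1$. This invariance holds because the sequence $r(G^m) = \alpha(G^m)/n^m$ is monotonically non-decreasing and convergent (as noted just after \eqref{eqn3}), so the subsequence $r((G^k)^m) = r(G^{km})$ shares the same limit. Hence it suffices to exhibit, for arbitrarily large $k$, an independent set $J_k$ in $G^k$ whose ratio $|J_k|/(|J_k|+|N(J_k)|)$ tends to $1$, and then appeal to \eqref{eqn5} applied to $G^k$.

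For the construction, fix a maximum independent set $I \subseteq V(G)$ and write $\alpha = |I|$ and $\beta = n - \alpha$; the hypothesis $r(G) > \tfrac12$ gives $\beta < \alpha$. Maximality forces $I$ to be dominating, since any vertex outside $I$ with no neighbor in $I$ could be added to $I$, so $N_G(I) = V(G) \setminus I$ and $|N_G(I)| = \beta$. Now take $J_k = I^k \subseteq V(G^k)$. Two distinct tuples in $I^k$ disagree in some coordinate in which both entries lie in the independent set $I$, hence are non-adjacent in $G$, so the tuples are non-adjacent in $G^k$ by the definition of the categorical product; thus $J_k$ is independent and $|J_k| = \alpha^k$. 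A tuple $(v_1,\ldots,v_k)$ lies in $N_{G^k}(J_k)$ iff for some $(i_1,\ldots,i_k) \in I^k$ each $v_s i_s$ is an edge of $G$, equivalently iff each $v_s$ has a neighbor in $I$, equivalently iff each $v_s \in V(G) \setminus I$. Therefore $|N_{G^k}(J_k)| = \beta^k$.

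Applying \eqref{eqn5} to the graph $G^k$ with witness $J_k$ then yields
\[
\Theta^T(G) \;=\; \Theta^T(G^k) \;\geq\; a(G^k) \;\geq\; \frac{\alpha^k}{\alpha^k + \beta^k} \;=\; \frac{1}{1+(\beta/\alpha)^k}.
\]
Since $\beta/\alpha < 1$, the right-hand side tends to $1$ as $k \to \infty$, forcing $\Theta^T(G) \geq 1$; combined with the trivial upper bound $\Theta^T(G) \leq 1$ this yields equality. There is no genuine obstacle in the plan: the only two points needing care are that a maximum independent set is dominating (so $N_G(I) = V(G) \setminus I$ rather than a proper subset, which is what makes the denominator exactly $\alpha^k + \beta^k$) and that the tensor capacity is preserved under taking powers. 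The real insight is simply noticing that the witness in \eqref{eqn5} tensors cleanly: the pair $(I, N_G(I))$ in $G$ becomes $(I^k, \bar{I}^k)$ in $G^k$, producing the geometric decay $(\beta/\alpha)^k$ that drives the ratio to $1$.
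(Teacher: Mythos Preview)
The paper does not actually prove this proposition; it merely attributes it to Brown, Nowakowski and Rall~\cite{kn:brown}. Your argument is correct and self-contained: the invariance $\Theta^T(G)=\Theta^T(G^k)$ follows from convergence of the monotone sequence $r(G^m)$, the set $I^k$ is independent in $G^k$, and the identification $N_{G^k}(I^k)=(V\setminus I)^k$ is exactly right (both directions use that a maximum independent set is dominating and that $I$ is independent). Combining this with the lower bound~\eqref{eqn5} applied to $G^k$ gives $\Theta^T(G)\geq \alpha^k/(\alpha^k+\beta^k)\to 1$.

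One small stylistic remark: your justification that $I^k$ is independent is slightly over-argued. You say two distinct tuples ``disagree in some coordinate in which both entries lie in $I$''; in fact \emph{every} coordinate has both entries in $I$, so in every coordinate the entries are non-adjacent in $G$, and hence the tuples are non-adjacent in $G^k$. The distinction does not affect correctness.

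Your route is in fact the same product construction that underlies the proof of~\eqref{eqn5} in~\cite{kn:brown}: one shows $\Theta^T(G)\geq a(G)$ precisely by exhibiting $I^k$ in $G^k$ and observing that $|I^k|/(|I^k|+|N(I^k)|)$ stays at $a(G)$ while the complement $(V\setminus (I\cup N(I)))^k$ becomes negligible. Here $I$ is already dominating, so the argument specialises cleanly to give the limit~$1$. So although the paper gives no proof to compare against, your approach is the standard one.
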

Therefore, a better lowerbound for $\Theta^T(G)$ is 
provided by 
\begin{equation}
\label{eqn6}
\Theta^T(G) \geq a^{\ast}(G)=\begin{cases}
a(G) & \quad\text{if $a(G) \leq \frac{1}{2}$}\\
1 & \quad\text{if $a(G) > \frac{1}{2}$.}
\end{cases}
\end{equation}

\bigskip 

\begin{definition}
Let $G=(V,E)$ be a graph. A fractional matching 
is a function $f: E \rightarrow \mathbb{R}^{+}$, which assigns 
a non-negative real number to each edge, such that 
for every vertex $x$ 
\[\sum_{e \ni x} \; f(e) \leq 1.\]
A fractional matching $f$ is perfect if it achieves the 
maximum 
\[f(E)=\sum_{e \in E}\; f(e) = \frac{|V|}{2}.\] 
\end{definition}
 
Alon and Lubetzky proved the following theorem in~\cite{kn:alon} 
(see also~\cite{kn:kloks}). 

\begin{theorem}
For every graph $G$ 
\begin{equation}
\label{eqn7}
\Theta^T(G)=1 \;\Leftrightarrow\; a^{\ast}(G)=1 \;
\Leftrightarrow\; \text{$G$ has no fractional perfect matching.}
\end{equation}
\end{theorem}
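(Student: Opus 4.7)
My plan is to prove the triple equivalence by the cyclic chain (ii) $\Rightarrow$ (i) $\Rightarrow$ (iii) $\Rightarrow$ (ii). The implication $a^{\ast}(G)=1 \Rightarrow \Theta^T(G)=1$ is immediate from the lower bound~\eqref{eqn6}, so all of the work sits in the other two arrows.

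For $\Theta^T(G)=1 \Rightarrow G$ has no fractional perfect matching I would argue the contrapositive. Given a fractional perfect matching $f\colon E(G)\to\mathbb{R}^+$, the key construction is a fractional perfect matching on $G^k$ defined coordinate-wise by
\[
f^{\times k}\bigl((g_1,\dots,g_k),(g_1',\dots,g_k')\bigr)=\prod_{i=1}^{k} f(g_i g_i'),
\]
which is well defined on $E(G^k)$ precisely because $g_i g_i'\in E(G)$ for every coordinate. Summing over the neighbours of a vertex $w=(w_1,\dots,w_k)$ factorises as $\prod_i\sum_{u:\,uw_i\in E(G)} f(uw_i)=\prod_i 1=1$, so $f^{\times k}$ is again fractional perfect. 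Combined with the LP-duality identity $\alpha_f(H)+\nu_f(H)=|V(H)|$ this gives $\alpha(G^k)\le\alpha_f(G^k)=|V(G)|^k/2$, so $r(G^k)\le\tfrac12$ for every $k$ and therefore $\Theta^T(G)\le\tfrac12<1$.

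For $G$ has no fractional perfect matching $\Rightarrow a^{\ast}(G)=1$ I would pass through the fractional independence number. The hypothesis yields $\nu_f(G)<|V(G)|/2$, hence $\alpha_f(G)>|V(G)|/2$. By half-integrality of the vertex packing polytope — a classical fact I would simply cite — an optimal fractional independent set $x$ may be chosen with values in $\{0,\tfrac12,1\}$. Writing $I=\{v:x_v=1\}$ and $H=\{v:x_v=\tfrac12\}$, the set $I$ is independent with $N(I)\subseteq\{v:x_v=0\}$, and the inequality $|I|+\tfrac12|H|>\tfrac12|V(G)|$ rearranges to $|I|>|V(G)|-|I|-|H|\ge|N(I)|$. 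This produces an independent set with $|I|>|N(I)|$, so $a(G)>\tfrac12$ and therefore $a^{\ast}(G)=1$.

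The main obstacle I foresee is the tensor-product construction in the middle step: one must verify that the coordinate-wise product genuinely attains equality in every per-vertex constraint of $G^k$, so that $\nu_f(G^k)=|V(G^k)|/2$ and not merely $\nu_f(G^k)\ge\nu_f(G)^k$. The supporting facts — LP duality $\alpha_f+\nu_f=|V|$, the trivial inequality $\alpha\le\alpha_f$, and half-integrality of the vertex packing LP — are classical and can be invoked without proof.
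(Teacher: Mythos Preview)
The paper does not prove this theorem itself; it merely quotes the result and cites Alon and Lubetzky~\cite{kn:alon} (and~\cite{kn:kloks}) for the proof. So there is no in-paper argument to compare against.

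Your cyclic chain is correct. A couple of small points worth making explicit in a write-up. In the middle step, you use that a fractional \emph{perfect} matching satisfies every vertex constraint with equality; this follows from the double count $2\sum_{e}f(e)=\sum_{v}\sum_{e\ni v}f(e)\le |V(G)|$, so $f(E)=|V(G)|/2$ forces $\sum_{e\ni v}f(e)=1$ for all $v$. With that in hand your factorisation argument is clean, and the identity $\alpha_f+\nu_f=|V|$ gives $r(G^k)\le\tfrac12$ for every $k$, which is exactly the dichotomy recorded in Corollary~\ref{cor pol a^T}. In the last step, you should note that the witness set $I=\{v:x_v=1\}$ is nonempty: if $I=\varnothing$ then $\alpha_f(G)=|H|/2\le |V(G)|/2$, contradicting $\alpha_f(G)>|V(G)|/2$. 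With that observation the inequality $|I|>|V(G)|-|I|-|H|\ge |N(I)|$ indeed yields $a(G)>\tfrac12$ and hence $a^{\ast}(G)=1$.
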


\begin{corollary}
\label{cor pol a^T}
There exists a polynomial-time algorithm to decide 
whether 
\[\Theta^T(G) =1 \quad\text{or}\quad \Theta^T(G) \leq \frac{1}{2}.\]  
\end{corollary}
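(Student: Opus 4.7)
The plan is to exploit the Alon--Lubetzky characterization established in the preceding theorem, namely that $\Theta^T(G)=1$ if and only if $G$ has no fractional perfect matching. This turns the desired dichotomy into a single polynomial-time question.

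First I would run a polynomial-time test for the existence of a fractional perfect matching in $G$: this can be done by solving the fractional matching LP directly, or by appealing to the half-integrality of the fractional matching polytope so that Edmonds' combinatorial matching algorithm on an auxiliary graph already computes $\nu_f(G)$. If the answer is ``no fractional perfect matching,'' the preceding theorem returns $\Theta^T(G)=1$ and the algorithm halts.

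The substantive step is to show that the complementary case forces $\Theta^T(G)\leq \tfrac{1}{2}$. Suppose $f$ is a fractional perfect matching of $G$ and, for every $k$, define a weight function on $E(G^k)$ by
\[
F\bigl((x_1,\dots,x_k),(y_1,\dots,y_k)\bigr)=\prod_{i=1}^{k} f(x_i,y_i),
\]
which is non-zero exactly on the edges of the tensor power. The vertex constraint at $(v_1,\dots,v_k)$ factorises as
\[
\sum_{e\ni(v_1,\dots,v_k)} F(e)=\prod_{i=1}^{k}\;\sum_{u\in N_G(v_i)} f(v_i,u)=\prod_{i=1}^{k} 1 =1,
\]
so $F$ is a fractional perfect matching of $G^k$. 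By LP duality between fractional matching and fractional vertex cover, any graph $H$ admitting a fractional perfect matching satisfies $\alpha(H)\leq |V(H)|/2$; hence $\alpha(G^k)\leq n^k/2$ for every $k$, $r(G^k)\leq \tfrac{1}{2}$, and passing to the limit yields $\Theta^T(G)\leq \tfrac{1}{2}$.

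The only candidate for an obstacle is confirming the product identity for $F$, but this is the routine multiplicativity calculation displayed above; no genuine difficulty arises, which is precisely why the dichotomy between $\Theta^T(G)=1$ and $\Theta^T(G)\leq \tfrac{1}{2}$ collapses to a single polynomial-time matching computation.
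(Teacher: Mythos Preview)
Your argument is correct. The paper states the corollary without proof, treating it as an immediate consequence of the Alon--Lubetzky theorem together with the standard fact that the existence of a fractional perfect matching is decidable in polynomial time; the dichotomy $\Theta^T(G)\in[0,\tfrac{1}{2}]\cup\{1\}$ is implicitly taken from the earlier proposition that $r(G)>\tfrac{1}{2}$ forces $\Theta^T(G)=1$ (applied to each power $G^k$). You instead make the $\leq\tfrac{1}{2}$ half explicit by the tensor-product construction of a fractional perfect matching on $G^k$, which is precisely the argument underlying one direction of the Alon--Lubetzky theorem itself. So your route is self-contained where the paper simply cites, but the two approaches are the same in substance.
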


\bigskip 
 
The following theorem was raised as a 
question by Alon and Lubetzky  
in~\cite{kn:alon,kn:lubetzky}. The theorem was proved by 
\'Agnes T\'oth~\cite{kn:toth2}. 

\begin{theorem}
\label{thm toth}
For every graph $G$ 
\[ \Theta^T(G)=a^{\ast}(G).\]
Equivalently, every graph $G$ satisfies  
\begin{equation}
\label{eqn8}
a^{\ast}(G^2)=a^{\ast}(G). 
\end{equation}
\end{theorem}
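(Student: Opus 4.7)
The lower bound $\Theta^T(G) \geq a^{\ast}(G)$ is already in place via~\eqref{eqn6}, so all the work lies in the reverse inequality. I would begin by settling the equivalence of~\eqref{eqn8} with the main identity $\Theta^T(G)=a^{\ast}(G)$. One direction is immediate: applying the main identity to $G^2$ and using $\Theta^T(G^2)=\Theta^T(G)$ yields $a^{\ast}(G^2)=a^{\ast}(G)$. For the converse, note that for any graph $H$ the choice $I=$ maximum independent set in the definition of $a(H)$ gives $a(H) \geq |I|/(|I|+|N(I)|) \geq \alpha(H)/|V(H)|=r(H)$, hence $r(H) \leq a^{\ast}(H)$. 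Iterating $a^{\ast}(G^2)=a^{\ast}(G)$ gives $a^{\ast}(G^{2^k})=a^{\ast}(G)$ for all $k$, so $r(G^{2^k}) \leq a^{\ast}(G)$ and letting $k \to \infty$ produces $\Theta^T(G) \leq a^{\ast}(G)$.

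It therefore suffices to prove $a^{\ast}(G^2)=a^{\ast}(G)$. The easy inequality $a^{\ast}(G^2) \geq a^{\ast}(G)$ is proved by lifting an optimizing independent set $I^{\ast} \subseteq V(G)$ to $I^{\ast} \times V(G) \subseteq V(G^2)$; modulo isolated vertices, the ratio is preserved. For the hard inequality I would split according to Theorem~\ref{thm toth}. In the case $a^{\ast}(G)=1$, the hypothesis says $G$ has no fractional perfect matching; given an alleged fractional perfect matching $\phi$ of $G^2$, the projection
\[f(uu')=\frac{1}{|V(G)|}\sum_{vv' \in E(G)} \phi\bigl((u,v)(u',v')\bigr)\]
is easily verified to be a fractional perfect matching of $G$, contradicting the hypothesis; hence $a^{\ast}(G^2)=1$ as required.

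The substantive case is $a^{\ast}(G)=a(G)<1$. Setting $\beta=(1-a(G))/a(G) \geq 1$, the target becomes the vertex-expansion bound $|N_{G^2}(J)| \geq \beta|J|$ for every independent $J \subseteq V(G^2)$. Slicing $J$ by the first coordinate, $J_v=\{u : (v,u) \in J\}$, the independence of $J$ in $G^2$ forces $N_G(J_v) \cap J_{v'}=\emptyset$ for every edge $vv' \in E(G)$. Using the fractional perfect matching of $G$ furnished by Theorem~\ref{thm toth} (available since $a^{\ast}(G)<1$), the plan is to weight matched pairs of slices and double-count $|N_{G^2}(J)|$ against $|J|$.

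The main obstacle is that the slices $J_v$ are generally not independent in $G$, so the defining inequality $|N_G(S)| \geq \beta|S|$ valid for independent $S$ does not apply directly. Overcoming this is the real content of T\'oth's theorem: one must extend the vertex-expansion bound from independent sets to weighted combinations of arbitrary vertex sets via the LP dual characterization of $a(G)$, after which the matching-weighted averaging closes the argument. I do not see an elementary combinatorial route that bypasses this LP step.
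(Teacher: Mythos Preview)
The paper does not give its own proof of this theorem; it attributes the result to T\'oth and merely records the key lemma --- for every independent set $I$ in $G\times H$,
\[|N_{G\times H}(I)| \;\geq\; |I|\cdot\min\{\,b(G),\,b(H)\,\},\]
where $b(\cdot)$ is the binding number --- together with its consequence~\eqref{eqn11}, from which the theorem is said to follow ``easily.'' So there is no full in-paper argument to match against beyond this sketch.

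Your framework is sound, but two remarks. First, the case $a^{\ast}(G)=1$ needs no fractional-matching projection: once you have the easy direction $a^{\ast}(G^2)\geq a^{\ast}(G)$, the value $1$ propagates immediately. (Incidentally, your two citations of ``Theorem~\ref{thm toth}'' for the fractional-matching equivalence should point to~\eqref{eqn7}; as written you invoke the very statement under proof.) Second, and more to the point, the obstacle you isolate --- that the slices $J_v$ need not be independent in $G$ --- is exactly what T\'oth's choice of the binding number is designed to dissolve: $b(G)$ minimises $|N(S)|/|S|$ over \emph{all} nonempty $S$ with $N(S)\neq V$, not just independent ones, so a per-slice expansion bound is available without any independence hypothesis and the double-counting stays purely combinatorial. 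Your proposed LP-dual extension of the expansion inequality would also close the argument, but it is a heavier instrument than what T\'oth actually uses; the ``elementary combinatorial route'' you say you do not see is precisely the passage from $a(\cdot)$ to $b(\cdot)$.
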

T\'oth proves that 
\begin{equation}
\label{eqn11}
\text{if $a(G) \leq \frac{1}{2}$ or $a(H) \leq \frac{1}{2}$ then}
\quad 
a(G \times H) \leq \max\;\{\;a(G),\;a(H)\;\}.
\end{equation}
Actually, T\'oth shows that, if $I$ is an independent 
set in $G \times H$ then   
\[|N_{G \times H}(I)| \geq |I| \cdot \min\;\{\;b(G),\;b(H)\;\}.\]
{F}rom this, Theorem~\ref{thm toth} easily follows. 
As a corollary 
(see~\cite{kn:alon,kn:lubetzky,kn:toth2})   
one obtains that, 
for any two graphs $G$ and $H$
\[r(G \times H) \leq \max\;\{\;a^{\ast}(G),\;a^{\ast}(H)\;\}.\] 

\bigskip 

T\'oth also proves the 
following theorem in~\cite{kn:toth2}. 
This was conjectured by 
Brown et al.~\cite{kn:brown}. 

\begin{theorem} 
\label{thm tensor cap union}
For any two graphs $G$ and $H$, 
\begin{equation}
\label{eqn9}
\Theta^T(G \oplus H)=\max\;\{\;\Theta^T(G),\;\Theta^T(H)\;\}.
\end{equation}
\end{theorem}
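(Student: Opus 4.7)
The plan is to invoke Theorem~\ref{thm toth} to replace $\Theta^T$ by $a^{\ast}$ throughout, reducing the claim to the equivalent identity
\[
a^{\ast}(G\oplus H)=\max\;\{\;a^{\ast}(G),\;a^{\ast}(H)\;\}.
\]
The argument then splits into two parts: first the raw statement $a(G\oplus H)=\max\{a(G),a(H)\}$, and then a short case analysis at the threshold $\tfrac{1}{2}$ to lift this from $a$ to $a^{\ast}$.

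For the raw identity, every independent set $I$ of $G\oplus H$ decomposes uniquely as $I=I_G\cup I_H$ with $I_G, I_H$ independent in $G, H$ respectively, and since $G$ and $H$ share no edges the neighborhood splits correspondingly: $N_{G\oplus H}(I)=N_G(I_G)\cup N_H(I_H)$, the union being disjoint. Writing $x=|I_G|$, $y=|N_G(I_G)|$, $u=|I_H|$, $v=|N_H(I_H)|$, the ratio defining $a(G\oplus H)$ becomes the mediant
\[
\frac{x+u}{(x+y)+(u+v)},
\]
which by the standard mediant inequality always lies between $x/(x+y)$ and $u/(u+v)$. Maximising over $I$ delivers $a(G\oplus H)\leq\max\{a(G),a(H)\}$. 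The reverse inequality is realised by taking one of $I_G, I_H$ to be $\es$ and the other to be the maximiser on its side.

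To promote this to $a^{\ast}$, I would split on whether $\max\{a(G),a(H)\}\leq\tfrac{1}{2}$. In that case the raw identity forces $a(G\oplus H)\leq\tfrac{1}{2}$ as well, so by~\eqref{eqn6} all three quantities coincide with their $a$-values and the identity transfers verbatim. Otherwise the max strictly exceeds $\tfrac{1}{2}$, whence $a(G\oplus H)>\tfrac{1}{2}$ too, and both sides of the asserted equation collapse to $1$ by definition of $a^{\ast}$. The only technical nuisance is the degenerate case in which one of $I_G, I_H$ is empty and its sub-ratio is ill-defined as $0/0$; this is harmless since such a term can simply be omitted from the mediant without altering the maximum. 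No deeper obstacle arises once Theorem~\ref{thm toth} is in hand: the substantive content lies entirely in the mediant step, and everything else is bookkeeping around the threshold $\tfrac{1}{2}$.
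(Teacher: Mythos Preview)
Your proposal is correct and matches the route the paper itself indicates: the paper does not give a self-contained proof of this theorem (it cites T\'oth), but immediately afterwards remarks that ``the analogue of this statement, with $a^{\ast}$ instead of $\Theta^T$, is straightforward''---which, combined with Theorem~\ref{thm toth}, is precisely your argument. Your mediant computation and threshold case split cleanly supply the details of that ``straightforward'' analogue.
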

Notice that the analogue of this statement, 
with $a^{\ast}$ 
instead of $\Theta^T$, is straightforward. 
The first part of the following theorem 
was proved by 
Alon and Lubetzky in~\cite{kn:alon}. 

\begin{theorem}
\label{lm tensor cap join}
 For any two graphs $G$ and $H$, 
\begin{equation}
\label{eqn10}
\Theta^T(G \oplus H) = \Theta^T(G \times H)= 
\max\;\{\;\Theta^T(G),\;\Theta^T(H)\;\}.
\end{equation}
\end{theorem}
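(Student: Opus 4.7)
The strategy is to leverage T\'oth's identity $\Theta^T(G) = a^{\ast}(G)$ from Theorem~\ref{thm toth} together with the behavior of $a$ under categorical products given by~\eqref{eqn11}. Observe that the disjoint-union half, $\Theta^T(G \oplus H) = \max\{\Theta^T(G), \Theta^T(H)\}$, is already the content of Theorem~\ref{thm tensor cap union}. Thus all the remaining work is to establish $\Theta^T(G \times H) = \max\{\Theta^T(G), \Theta^T(H)\}$; chaining the two identities then gives the full theorem.

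For the lower bound I would use that the categorical product is associative and commutative up to isomorphism, so $(G \times H)^k \simeq G^k \times H^k$. Applying the elementary inequality~\eqref{eqn3} to $G^k$ and $H^k$ yields $r\bigl((G \times H)^k\bigr) \geq \max\{r(G^k),\, r(H^k)\}$, and letting $k \to \infty$ gives $\Theta^T(G \times H) \geq \max\{\Theta^T(G), \Theta^T(H)\}$.

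For the upper bound, Theorem~\ref{thm toth} turns the desired inequality into $a^{\ast}(G \times H) \leq \max\{a^{\ast}(G),\, a^{\ast}(H)\}$, and I would finish by a short case split. If $\min\{a(G), a(H)\} > \tfrac{1}{2}$, then $a^{\ast}(G) = a^{\ast}(H) = 1$ and the bound is immediate. Otherwise, without loss of generality $a(G) \leq \tfrac{1}{2}$, so~\eqref{eqn11} applies and gives $a(G \times H) \leq \max\{a(G),\, a(H)\}$. In the subcase $a(H) \leq \tfrac{1}{2}$ this forces $a(G \times H) \leq \tfrac{1}{2}$, hence $a^{\ast}(G \times H) = a(G \times H) \leq \max\{a^{\ast}(G),\, a^{\ast}(H)\}$; in the subcase $a(H) > \tfrac{1}{2}$ we have $a^{\ast}(H) = 1$, so the right-hand side is $1$ and the bound is again trivial.

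The main subtlety I expect is exactly this case split: the cited inequality~\eqref{eqn11} only controls $a$, not $a^{\ast}$, and the translation between them breaks at the threshold $\tfrac{1}{2}$. The analysis above resolves this by noting that whenever $a(G \times H)$ could potentially cross $\tfrac{1}{2}$, one of $a(G)$ or $a(H)$ must already exceed $\tfrac{1}{2}$, which makes the right-hand side $\max\{a^{\ast}(G), a^{\ast}(H)\}$ equal to $1$. With both inequalities in hand and Theorem~\ref{thm tensor cap union} for the disjoint union, the full chain of equalities in~\eqref{eqn10} follows.
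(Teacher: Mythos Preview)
Your proof is correct. Note, however, that the paper does not actually supply a proof of this theorem: it attributes the equality $\Theta^T(G\oplus H)=\Theta^T(G\times H)$ to Alon and Lubetzky and then tacitly relies on Theorem~\ref{thm tensor cap union} (T\'oth) for the identification with $\max\{\Theta^T(G),\Theta^T(H)\}$. Your argument takes a slightly different route: you compute $\Theta^T(G\times H)$ directly via T\'oth's identity $\Theta^T=a^{\ast}$ together with~\eqref{eqn11}, and only then chain it with Theorem~\ref{thm tensor cap union}. This bypasses the Alon--Lubetzky step entirely, at the cost of depending on the later and stronger result $\Theta^T=a^{\ast}$; indeed your upper bound is essentially the corollary $r(G\times H)\leq\max\{a^{\ast}(G),a^{\ast}(H)\}$ that the paper records right after Theorem~\ref{thm toth}. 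Either way the content is the same, and your case split handling the $\tfrac12$ threshold is the correct way to pass from~\eqref{eqn11} (which bounds $a$) to the desired bound on $a^{\ast}$.
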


\bigskip 

For cographs we obtain the following theorem. 

\begin{theorem}
There exists an efficient algorithm to compute 
the tensor capacity for cographs. 
\end{theorem}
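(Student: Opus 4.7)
The plan is to use Theorem~\ref{thm toth}: since $\Theta^T(G)=a^{\ast}(G)$, it suffices to compute $a(G)=\max_I |I|/(|I|+|N(I)|)$ over nonempty independent sets $I$ and then apply~\eqref{eqn6}. One might hope to recurse on $a$ itself along the cotree. For a disjoint union this works, since $(x_1+x_2)/(y_1+y_2)$ is a mediant of $x_1/y_1$ and $x_2/y_2$, and one easily obtains $a(G_1 \oplus G_2)=\max\{a(G_1),a(G_2)\}$. For a join it fails: passing from $G_1$ to $G_1 \otimes G_2$ enlarges the denominator of every contribution from an independent set $I \subseteq V(G_1)$ by the additive constant $|V(G_2)|$ without touching the numerator, and the minimizer of $|N_{G_1}(I)|/|I|$ need not be the minimizer of $(|N_{G_1}(I)|+|V(G_2)|)/|I|$.

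To repair this I would carry the full Pareto profile of pairs $(|I|,|N(I)|)$. For each cograph $G$ define
\[ f_G(i) \;=\; \min\{\,|N_G(I)|\ :\ I \text{ is an independent set of } G,\ |I|=i\,\}, \qquad 0 \le i \le \alpha(G), \]
with the convention $f_G(0)=0$. Then $a(G)=\max_{1 \le i \le \alpha(G)} i/(i+f_G(i))$, so knowing $f_G$ at the root of the cotree determines $\Theta^T(G)$.

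The main step is a bottom-up recurrence for $f_G$ on the cotree. At a leaf, $f_{K_1}(1)=0$. At a $\oplus$-node, an independent set of $G_1 \oplus G_2$ is a disjoint union $I_1 \cup I_2$ whose neighborhood is $N_{G_1}(I_1) \sqcup N_{G_2}(I_2)$, giving the min-plus convolution
\[ f_{G_1 \oplus G_2}(i) \;=\; \min_{i_1+i_2=i}\bigl(f_{G_1}(i_1)+f_{G_2}(i_2)\bigr). \]
At a $\otimes$-node, any nonempty independent set lies entirely on one side and its external neighborhood swallows the whole opposite side, so
\[ f_{G_1 \otimes G_2}(i) \;=\; \min\bigl(f_{G_1}(i)+|V(G_2)|,\ f_{G_2}(i)+|V(G_1)|\bigr) \qquad (i \ge 1), \]
with the convention that $f_{G_j}(i)=+\infty$ if $i > \alpha(G_j)$. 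Each $\oplus$-combine is a convolution of vectors of length at most $n+1$, and each $\otimes$-combine is linear, so the whole dynamic program runs in $O(n^3)$, yielding $f_G$, then $a(G)$, and finally $\Theta^T(G)=a^{\ast}(G)$.

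The only real obstacle is the observation in the first paragraph: since $a$ itself does not compose under joins, one has to track the richer profile $f_G$. Once that is recognized, the cotree recursion is routine.
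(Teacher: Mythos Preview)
Your proposal is correct and follows essentially the same approach as the paper: reduce to computing $a(G)$ via Theorem~\ref{thm toth}, and then maintain along the cotree, for each possible size $k$, the minimum neighborhood size $\ell(k)=f_G(k)$ of an independent set of size $k$, with exactly the min-plus convolution at $\oplus$-nodes and the additive shift rule at $\otimes$-nodes that you wrote down. Your discussion of why recursing on $a$ alone fails at joins is a nice addition not spelled out in the paper; as a minor note, the standard LCA-pair counting argument shows the total convolution cost over a binary cotree is in fact $O(n^2)$, improving your stated $O(n^3)$.
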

\begin{proof}
By Theorem~\ref{thm toth} it is sufficient to 
compute $a(G)$, as defined in~\eqref{eqn5}. 

\medskip 

\noindent 
Consider a cotree for $G$. For each node the algorithm 
computes a table. The table contains numbers  
$\ell(k)$, for $k \in \mathbb{N}$, where 
\[\ell(k)= min\;\{\;|N(I)|\;|\; \text{$I$ is an independent set with 
$|I|=k$}\;\}.\] 
Notice that $a(G)$ can be obtained from 
the table at the root node via 
\[a(G)=\max_k \; \frac{k}{k+\ell(k)}.\] 

\medskip 

\noindent
Assume $G$ is the union of two cographs $G_1 \oplus G_2$. 
An independent set $I$ is the union of two 
independent sets $I_1$ in $G_1$ and $I_2$ in $G_2$. 
Let the table entries for $G_1$ and $G_2$ be 
denoted by the functions $\ell_1$ and $\ell_2$.  
Then 
\[\ell(k)=\min\;\{\;\ell_1(k_1) + \ell_2(k_2)\;|\; k_1+k_2=k\;\}.\] 

\medskip 

\noindent
Assume that $G$ is the join of two cographs, say 
$G = G_1 \otimes G_2$. 
An independent set in $G$ can have vertices in at most 
one of $G_1$ and $G_2$. Therefore, 
\[\ell(k) = \min\;\{\;\ell_1(k)+|V(G_2)|,\; 
\ell_2(k)+|V(G_1)|\;\}.\] 

\medskip 

\noindent
This proves the theorem. 
\qed\end{proof}

\begin{remark}
The tensor capacity is computable in 
polynomial time for many other classes of graphs 
via similar methods. 
We describe algorithms for some classes of graphs 
in Appendix~\ref{section classes}.  
\end{remark}

\section{An exact exponential algorithm for the tensor capacity}

Let $G$ be a splitgraph with a partition $\{S,C\}$ of 
its $n$ vertices such that $G[C]$ is a clique 
and $G[S]$ is an independent set.  
For any independent set $I$ of $G$, $I$ can contain 
at most one vertex from $C$.  Define, for $i \in\{0,1\}$,
\[a_i(G)=\max\;\left\{\;
\;\frac{|I|}{|I|+|N(I)|}\;\mid \; 
\text{$I$ an independent set with $|C \cap I|=i$}\;\right\}\] 
Then the value $a(G)$
is obtained by 
\[\max\; \{\; a_0(G), a_1(G)\; \}.\] 

\bigskip 

To compute $a_0(G)$, we shall make use 
of the following simple observation:  
If $S$ can be partitioned into two sets $S_1$ and $S_2$, 
such that their neighbor sets $N(S_1)$ and $N(S_2)$ are disjoint, 
then there exists an optimal $I^*$ for $a_0(G)$, such 
that $I^* \subseteq S_1$ or $I^* \subseteq S_2$.  
To see this, suppose that it is not the case.  
Then, by assumption we can partition $I^*$ into non-empty 
sets $I_1 = I^* \cap S_1$ and $I_2 = I^* \cap S_2$, and we have 
$|I^*| = |I_1| + |I_2|$ and $|N(I^*)| = |N(I_1)| + |N(I_2)|$. 
Then 
\[ a_0(G) = \frac{|I^*|}{|I^*|+|N(I^*)|} \leq 
\max\;\left \{\; \frac{|I_1|}{|I_1|+|N(I_1)|}, 
\frac{|I_2|}{|I_1|+|N(I_2)|} \;\right\} \leq a_0(G).\]
This proves the claim. 

\bigskip 

Based on this observation, we modify a 
technique described by, eg, Cunningham~\cite{kn:cunningham}, 
that transforms the problem into a max-flow (min-cut) problem.  
We construct a flow network $F$ with vertices corresponding 
to each vertex of $S$ and $C$, a source vertex $s$ and 
a sink vertex $t$.  
We make the source $s$ adjacent to each 
vertex in $S$, with capacity 1, 
and the sink $t$ adjacent to each vertex in $C$, 
with capacity 1 as well.  
In addition, if $u \in S$ and $v \in C$ 
are adjacent in the original graph $G$, 
the corresponding vertices are adjacent in $F$, 
with capacity set to $\infty$.  
Note that we omit the edges between vertices 
in $C$. 

\bigskip 

Consider a minimum $s$-$t$ cut in $F$.  
Let $S_1$ be the subset of $S$ whose vertices 
are in the same partition as $s$, and 
$S_2 = S - S_1$.  
The weight of such a cut must be finite, as the maximum 
$s$-$t$ flow is bounded by $\min\; \{\; |S|, |C|\;\}$.  
Thus, we have that $N(S_1)$ and $N(S_2)$ are disjoint.   
Moreover, the total weight of the edges in the cut-set 
is $|S| - |S_1| + |N(S_1)|$, which implies that 
\[ S_1 = \arg\min_{S'}\; \{\; |N(S')| - |S'| \;\mid\; S' \subseteq S\; \}. \] 
So after running the flow algorithm to obtain $S_1$, 
there will be three cases:
\begin{description}
\item[Case 1:] the optimal $I^*$ for $a_0(G)$ is exactly $S_1$;
\item[Case 2:] the optimal $I^*$ for $a_0(G)$ is a proper subset of $S_1$;
\item[Case 3:] the optimal $I^*$ for $a_0(G)$ is a subset of $S_2$;
\end{description}
Note that Case 2 is impossible, 
since for any such proper subset $S_1'$, we have
\[  |N(S_1')| - |S_1'| \geq |N(S_1)| - |S_1|  
\tag{by min-cut} \]
which implies
\[  \frac{|N(S_1')| - |S_1'|}{|S_1'|} > \frac{|N(S_1')| - |S_1'|}{|S_1|} 
\geq \frac{|N(S_1)| - |S_1|}{|S_1|}, \]
so that
\[ \frac{|N(S_1')|}{|S_1'|} > \frac{|N(S_1)|}{|S_1|}. \]
Consequently, $S_1'$ cannot be an optimal set that achieves $a_0(G)$.

\bigskip 

Thus, we have either Case 1 or Case 3.  
To handle Case 3, we simply remove $S_1$ and $N(S_1)$ 
from the graph, and solve it recursively.  
In total, finding $a_0(G)$ requires $O(|S|)$ runs of the max-flow
algorithm, and can be solved in polynomial time.  

\begin{remark}
There exists a somewhat faster algorithm, 
also proposed by Cunningham~\cite{kn:cunningham}, 
which requires $O(\log |S|)$ runs of max-flow in a 
slightly different flow network; we omit the details for brevity.
\end{remark}

\medskip 

Finally, to compute $a_1(G)$, notice that,  
if an independent set $I$ contains some 
vertex $v \in C$ then $N(I)$ contains all vertices of $C$.  
When $|I|/(|I| + |N(I)|)$ is maximal,  
$I$ will contain all the vertices in $S$ that are 
nonadjacent to $v$. Hence  
\[ a_1(G) = \frac{n - d}{n}, \]
where $d$ denotes the minimum degree of a vertex in $C$. 
It follows that $a_1(G)$ can be obtained  
in linear time.

\bigskip 

This proves the following theorem. 
\begin{theorem}
There exists a polynomial-time algorithm to 
compute the tensor capacity for splitgraphs. 
\end{theorem}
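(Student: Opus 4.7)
The plan is to invoke Theorem~\ref{thm toth} so that computing $\Theta^T(G)$ reduces to computing $a(G)$: once $a(G)$ is known exactly, the value $a^{\ast}(G)=\Theta^T(G)$ is read off from~\eqref{eqn6}. Given a splitgraph partition $\{S,C\}$, the key structural observation is that any independent set meets the clique $C$ in at most one vertex, so I would split the problem as $a(G)=\max\{a_0(G),a_1(G)\}$, where $a_0$ optimizes over independent sets disjoint from $C$ and $a_1$ over independent sets containing exactly one vertex of $C$.

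The quantity $a_1(G)$ admits a closed form. Fixing $v\in C$, the best independent set through $v$ is $\{v\}\cup(S\setminus N(v))$: since $C$ is a clique and $S$ is independent, one checks that the numerator equals $n-\deg(v)$ while the denominator equals $n$. Thus $a_1(G)$ is found in linear time by scanning $C$ for a vertex of minimum degree.

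For $a_0(G)$ I would translate the problem into a min-cut computation, following a Cunningham-style construction: a source $s$ joined to each vertex of $S$ with capacity $1$, a sink $t$ joined to each vertex of $C$ with capacity $1$, and each $S$--$C$ edge assigned capacity $\infty$. A minimum cut has finite weight, hence avoids every $\infty$-edge, and the source side yields a set $S_1\subseteq S$ with $N(S_1)\cap N(S\setminus S_1)=\emptyset$ realising the minimum of $|N(S')|-|S'|$. A short disjoint-neighborhood decomposition argument then shows that an optimum $I^{\ast}$ for $a_0(G)$ is either $S_1$ itself or lies entirely in $S\setminus S_1$; in the latter case I would delete $S_1\cup N(S_1)$ and recurse on the reduced splitgraph.

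The main obstacle, and the only non-routine step, is excluding the intermediate possibility that the optimum is a \emph{proper} subset $S_1'\subsetneq S_1$. I would dispose of this by a ratio manipulation: min-cut optimality gives $|N(S_1')|-|S_1'|\geq |N(S_1)|-|S_1|$, and dividing by $|S_1'|<|S_1|$ yields $|N(S_1')|/|S_1'|>|N(S_1)|/|S_1|$, so $S_1'$ strictly worsens the $a_0$-ratio. Because each recursive call strictly shrinks $S$, the procedure performs at most $|S|$ max-flow computations, producing $a_0(G)$, and therefore $\Theta^T(G)$, in polynomial time.
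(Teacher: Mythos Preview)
Your proposal is correct and follows essentially the same route as the paper: the split $a(G)=\max\{a_0(G),a_1(G)\}$, the closed form $a_1(G)=(n-d)/n$ via the minimum-degree vertex of $C$, the Cunningham-style unit-capacity flow network for $a_0$, the min-cut characterisation of $S_1$ as the minimiser of $|N(S')|-|S'|$, the ratio argument ruling out proper subsets of $S_1$, and the recursion on $S\setminus S_1$ with at most $|S|$ max-flow calls all match the paper's argument line for line. The only cosmetic difference is that you treat $a_1$ before $a_0$, whereas the paper does the reverse.
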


\bigskip 

We modify the approach to obtain an 
exact algorithm for the tensor capacity of a general graph $H$.  
Let $n$ be the number of vertices in $H$.  
Assume we are given a maximal independent set $I$ of $H$. 
We let $I$ play the role of $S$ 
and $N(I)$ play the role of $C$ in the above transformation. 
Then, by the analysis above, we obtain a subset $I_1$ of $I$ with

\[ I_1 = \arg\max_{I'}\; 
\left\{\; \frac{|I'|}{|I'| + |N(I')|} 
\;\mid\; I' \subseteq I\; \right\}. \] 

\bigskip 

The algorithm generates 
all the maximal independent sets $I$s, 
and finds the corresponding subset $I_1$s for each of them.  
This yields the value $a(H)$. 
By Moon and Moser's classic result, 
$H$ contains at most $3^{n/3}$ maximal independent 
sets. Furthermore, by, eg, the algorithm of Tsukiyama et al., 
they can be generated 
in polynomial time per maximal independent set. 
Thus we obtain the following theorem. 

\begin{theorem}
There exists an $O^{\ast}(3^{n/3})$ algorithm to compute 
the tensor capacity for a graph with $n$ vertices. 
\end{theorem}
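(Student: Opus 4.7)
The plan is as follows. By Theorem~\ref{thm toth}, $\Theta^T(H) = a^*(H)$, and $a^*$ is determined by $a$ via the cases in~\eqref{eqn6} (equal to $a(H)$ when $a(H)\le\tfrac12$, equal to $1$ otherwise). So it suffices to compute
\[ a(H) = \max_{I} \frac{|I|}{|I|+|N(I)|}, \]
with $I$ ranging over independent sets of $H$. I would reduce this to at most $3^{n/3}$ instances of a max-flow/min-cut problem, one per maximal independent set of $H$.

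The key observation I would rely on is that any independent set $I^*$ realizing the maximum can be extended greedily to some maximal independent set $J\supseteq I^*$. Since $J$ is maximal, every vertex outside $J$ has a neighbour in $J$, so $N(J)=V(H)\setminus J$; in particular $N(I)\subseteq V(H)\setminus J$ for every $I\subseteq J$. Consequently
\[ a(H) = \max_{J\text{ max.\ indep.}}\; \max_{I\subseteq J}\; \frac{|I|}{|I|+|N(I)|}. \]
For each fixed $J$ the inner maximum is exactly an instance of the $a_0$ problem considered for splitgraphs, with $J$ playing the role of $S$ and $V(H)\setminus J$ playing the role of $C$. I would set up the same flow network (source $s$ joined to each vertex of $J$ with capacity $1$, sink $t$ joined to each vertex of $V(H)\setminus J$ with capacity $1$, infinite-capacity arcs for the actual edges across the bipartition), extract $S_1\subseteq J$ from a min-cut, and recurse on $J\setminus S_1$ via the same three-case analysis already used for splitgraphs. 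This runs in polynomial time per $J$.

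Finally, by Moon and Moser there are at most $3^{n/3}$ maximal independent sets in $H$, and by Tsukiyama et al.\ they can be enumerated with polynomial delay. Iterating the flow procedure over all enumerated $J$ and taking the maximum yields $a(H)$, hence $\Theta^T(H)$, in total time $O^*(3^{n/3})$. The main point that needs verifying is that the splitgraph-style reduction is still correct when $C$ is replaced by the arbitrary (non-clique) set $V(H)\setminus J$; but the argument for splitgraphs only used that $S$ is independent and that we are optimising over subsets of $S$, while the maximality of $J$ supplies exactly the inclusion $N(I)\subseteq V(H)\setminus J$ that makes the bipartite flow network faithful to $|N(I)|-|I|$. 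That is the only delicate step; everything else is bookkeeping.
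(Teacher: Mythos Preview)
Your proposal is correct and essentially identical to the paper's own argument: the paper also reduces to computing $a(H)$ via Theorem~\ref{thm toth}, enumerates all maximal independent sets using Moon--Moser and Tsukiyama et al., and for each maximal independent set $J$ reuses the flow-based $a_0$ computation from the splitgraph section with $J$ in the role of $S$ and $N(J)=V(H)\setminus J$ in the role of $C$. Your remark that the splitgraph argument never used the clique structure of $C$, only the independence of $S$, is exactly the point the paper relies on (implicitly).
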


\bigskip 

\begin{remark}
We moved the section on the ultimate categorical 
independent domination ratio to Appendix~\ref{section domination ratio}. 
Appendix~\ref{hardness} contains the NP-completeness 
proof for $\alpha(G \times K_4)$ when $G$ is a planar graph 
of degree three. 
\end{remark}

\appendix

\section{The ultimate categorical independence ratio for 
some classes of graphs}
\label{section classes}

In this section we show that the tensor capacity is 
polynomial for permutation graphs, interval graphs, and graphs 
of bounded treewidth. The last result also shows that there 
is a PTAS for the ultimate categorical independence ratio 
of planar graphs. 

\subsection{The tensor capacity for permutation graphs}

A permutation diagram is obtained as follows. 
Consider two horizontal lines, $L_1$ and $L_2$, and 
label $n$ distinct points on each by $1,\dots,n$. For each 
label $i$ take the straight line segment that connects 
the points on $L_1$ and $L_2$ with that label. 
Pnueli et al. defined permutation graphs as follows~\cite{kn:pnueli}. 

\begin{definition}
A graph is a permutation graph if it is the intersection 
graph of the straight line segments in a permutation diagram. 
\end{definition}

Baker et al. characterized permutation graphs as follows~\cite{kn:baker}. 

\begin{theorem}
A graph $G$ is a permutation graph if and only if both $G$ and 
$\bar{G}$ are comparability graphs. 
\end{theorem}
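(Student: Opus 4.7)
The plan is to prove the two directions separately, using the geometry of the permutation diagram for the forward direction and the theory of transitive orientations (Gallai--Dushnik--Miller style) for the backward direction.

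For the forward direction, I would fix a permutation diagram realizing $G$ and use the labeling $1,\dots,n$ on $L_1$ to induce a linear order on $V(G)$. Writing $\pi(i)$ for the label on $L_2$ of the segment whose upper endpoint is labeled $i$ on $L_1$, two segments $i<j$ cross precisely when $\pi(i)>\pi(j)$. I would then orient $E(G)$ by $i\to j$ whenever $i<j$ and $\pi(i)>\pi(j)$, and orient $E(\bar G)$ by $i\to j$ whenever $i<j$ and $\pi(i)<\pi(j)$. Transitivity of each orientation is immediate: if $i\to j\to k$ in the first orientation then $i<j<k$ and $\pi(i)>\pi(j)>\pi(k)$, so $i\to k$, and an identical argument works for $\bar G$. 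Hence both $G$ and $\bar G$ are comparability graphs.

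For the backward direction, I would start from transitive orientations $F$ of $G$ and $F'$ of $\bar G$ and show that the two tournaments $T_1=F\cup F'$ and $T_2=F\cup (F')^{-1}$ on $V(G)$ are both transitive, and hence define linear orders $L_1,L_2$ on $V(G)$. Given these, I would place vertex $v$ at position $L_1(v)$ on the top line and at position $L_2(v)$ on the bottom line of a permutation diagram; two segments cross iff their relative order in $L_1$ and $L_2$ differ, which by construction happens iff the connecting edge lies in $F$, i.e.\ iff $uv\in E(G)$. So $G$ is realized as a permutation graph.

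The main obstacle is verifying that $T_1$ and $T_2$ are transitive. Given a triple $u,v,w$ with $u\to v$ and $v\to w$ in $T_1$, there are four cases according to which of these arcs lie in $F$ and which in $F'$. The two homogeneous cases follow directly from transitivity of $F$ or of $F'$. The mixed cases are the delicate ones: one must rule out the possibility $w\to u$ in $T_1$ by a case analysis that combines transitivity of the present arc with transitivity of the orientation living on the opposite edge set, exploiting the fact that $E(G)$ and $E(\bar G)$ together cover all pairs. The same argument, with $F'$ replaced by $(F')^{-1}$, handles $T_2$. Once transitivity is in hand, the construction of the diagram and the verification that the induced graph is $G$ are routine.
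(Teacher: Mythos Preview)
The paper does not supply its own proof of this theorem; it is quoted as a known characterization due to Baker, Fishburn and Roberts and is used only as background for the permutation-graph algorithm. So there is nothing in the paper to compare your argument against.

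Your proposal follows the standard Dushnik--Miller/Pnueli--Lempel--Even route and is essentially correct. The forward direction is fine as written. In the backward direction your transitivity check for $T_1=F\cup F'$ (and, by the same reasoning, for $T_2$) goes through: in each mixed case a putative back-arc $w\to u$, whether it lies in $F$ or in $F'$, combines with one of the two given arcs by transitivity of that orientation to force an arc living in the wrong edge set, a contradiction.

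There is one small slip in the final verification. With $T_1=F\cup F'$ and $T_2=F\cup (F')^{-1}$, an edge $uv\in E(G)$ (say $u\to v$ in $F$) keeps the \emph{same} relative order in both $L_1$ and $L_2$, so those two segments do not cross; it is precisely the pairs in $E(\bar G)$ whose relative order is reversed between $L_1$ and $L_2$. Hence the diagram you build realizes $\bar G$ rather than $G$. This is harmless and easily repaired: either note that permutation graphs are closed under complementation (immediate from your forward direction, or by reversing the labeling on one of the two horizontal lines), or simply take $T_2=F^{-1}\cup F'$ instead, which makes the crossing pairs coincide with $E(G)$ directly.
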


\bigskip 

\begin{theorem}
There exists an $O(n^3)$ algorithm to compute the 
tensor capacity for permutation graphs. 
\end{theorem}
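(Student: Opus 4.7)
The plan is to invoke Theorem~\ref{thm toth} so that it suffices to compute $a(G)$ as defined in~\eqref{eqn5}, and then to exploit the fact that in a permutation graph the quantity $|I|+|N(I)|$ admits a clean additive decomposition along an increasing subsequence, which is amenable to a longest-path DP on the diagram.

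Fix a permutation diagram realizing $G$ and identify each vertex $i$ with the pair $(i,\pi(i))$, where $i$ is its endpoint on $L_1$ and $\pi(i)$ its endpoint on $L_2$. Two segments cross iff $(i-j)(\pi(i)-\pi(j))<0$, so an independent set corresponds precisely to an increasing subsequence of $\pi$. Given such a subsequence $I=v_1<v_2<\cdots<v_k$, a vertex $w\notin I$ lies outside $N(I)$ iff its segment crosses none of the $v_s$, equivalently iff $(w,\pi(w))$ can be inserted into the sequence while preserving monotonicity. Introducing virtual sentinels $v_0$ at position $0$ on both lines and $v_{k+1}$ at position $n+1$ on both lines, we obtain
\[
|I|+|N(I)|\;=\;n-\sum_{s=0}^{k}B(v_s,v_{s+1}),
\]
where $B(u,v)$ counts the vertices $w$ with $u<w<v$ and $\pi(u)<\pi(w)<\pi(v)$. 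The crucial point is that the right-hand side is \emph{additive} over the consecutive pairs of the chosen subsequence.

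Maximizing $|I|/(|I|+|N(I)|)=k/(n-\sum_{s}B(v_s,v_{s+1}))$ for each fixed $k$ therefore reduces to maximizing $\sum_s B(v_s,v_{s+1})$, which is a routine path DP on the permutation poset. First precompute $B(u,v)$ for all pairs in $O(n^2)$ time via a two-dimensional prefix sum on the $n\times n$ grid of diagram points. Then define
\[
g[v,k]\;=\;\max_{\substack{u<v\\ \pi(u)<\pi(v)}}\bigl(g[u,k-1]+B(u,v)\bigr),
\]
with base case $g[v,1]=B(v_0,v)$, and set $h[k]=\max_{v}\bigl(g[v,k]+B(v,v_{k+1})\bigr)$. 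Then $a(G)=\max_{1\le k\le n}k/(n-h[k])$, from which $\Theta^T(G)$ is read off via~\eqref{eqn6}. The DP has $O(n^2)$ states each filled in $O(n)$ time, yielding the promised $O(n^3)$ bound.

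The main obstacle is selecting the right quantity to track by DP: $|N(I)|$ itself is not additive along the subsequence, since a single vertex of $N(I)$ may be a neighbor of several $v_s$, whereas its complement $n-|I|-|N(I)|$ counts precisely the vertices insertable between consecutive elements of $I$ and hence splits into a sum of disjoint box counts. Once this observation is in hand, the remainder of the argument is a direct generalization of the longest-increasing-subsequence algorithm.
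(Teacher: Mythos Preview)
Your argument is correct and the overall skeleton matches the paper's: a dynamic program indexed by the rightmost segment of the independent set and by $|I|=k$, with transitions from the previous segment, for $O(n^{3})$ total work. The difference lies in the quantity that is accumulated. The paper tracks $|N(I)|$ directly via
\[
N_{k}(x)=\min_{y}\bigl(N_{k-1}(y)+|N(x)\setminus N(y)|\bigr),
\]
which is only valid because of the (unstated) geometric lemma that any segment crossing both $x$ and some earlier element $z$ of the subsequence must already cross the immediate predecessor $y$; this is what prevents double counting. You instead track the complement $n-|I|-|N(I)|$ and observe that a non-neighbour of $I$ sits in exactly one box $B(v_{s},v_{s+1})$, so additivity is manifest and no auxiliary crossing lemma is needed. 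Your formulation also makes the $O(n^{2})$ precomputation of the edge weights explicit via 2D prefix sums, whereas the paper leaves the computation of $|N(x)\setminus N(y)|$ implicit. In short: same DP, cleaner invariant on your side; the paper's version is terser but leans on a correctness fact it does not spell out.
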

\begin{proof}
Consider a permutation diagram. Notice that an independent 
set consists of line segments that are parallel. 

\medskip 

\noindent 
For each line segment $x$, and for each integer $k$, 
compute the smallest neighborhood of an independent 
set of cardinality $k$ that has $x$ as its right-most 
line segment. 

\medskip 

\noindent
To compute this for $x$, consider the line segments $y$ 
that lie to the left of $x$. Let $N_{k-1}(y)$ 
be the smallest number of neighbors of an independent set 
with $k-1$ vertices that has $y$ as its right-most 
line segment. Let $N(x,y)$ be the number of 
neighbors of $x$ that are not neighbors of $y$. 
The value $N_k(x)$, for $k \in \mathbb{N}$, is defined as follows.  
\[N_k(x)= 
\begin{cases}
|N(x)| & \;\text{if $k=1$}\\
\min\;\{\;N_{k-1}(y)+N(x,y)\;|\; \text{$y$ lies to the 
left of $x$}\;\} & \text{otherwise.}
\end{cases}
\] 

\medskip 

\noindent 
The value 
\[a(G)=\max_{\text{$I$ an independent set}} \;\frac{|I|}{|I|+|N(I)|}\] 
is obtained by 
\[a(G)= \max\;\left\{\;\frac{k}{k+N_k(x)}\;\mid \; 
x\in V \quad k \in \mathbb{N}\;
\right\}.\]
The tensor capacity $\Theta^T(G)$ is obtained from 
Theorem~\vref{thm toth} via Formula~\eqref{eqn6} on 
Page~\pageref{eqn6}. 

\medskip 

\noindent
This proves the theorem. 
\qed\end{proof}

\subsection{The tensor capacity for interval graphs}

Haj\'os defined interval graphs as follows~\cite{kn:hajos}. 

\begin{definition}
An interval graph is an intersection graph of a collection of 
intervals on the real line.
\end{definition}

In the following we identify vertices and the intervals that represent 
them. 

\bigskip 

Notice that an independent set consists of a collection 
of disjoint intervals. So there is a linear, left-to-right 
ordering of the vertices of an independent set. 

\begin{definition}
Let $x$ be a vertex and let $k \in \mathbb{N}$. 
Let $I(x,k)$ denote the collection of 
independent sets of cardinality $k$ 
in which $x$ is 
the rightmost interval. 
Define 
\begin{equation}
i(x,k) = \min \; \{ \;|N(I)|\;|\;I \in I(x,k) \;\}.
\end{equation}
\end{definition}

\bigskip 
 
To compute $a(G)$, 
We can compute the value of $i(x,k)$ via the recurrence relation,
\begin{equation}
i(x,k) = \min_y \;\{\;i(y,k-1) + |N(x)\setminus N(y)|\;\} 
\end{equation}
where $y$ is one of the intervals whose right endpoint is 
to the left of the left endpoint of $x$. 
To avoid overcounting, we only add  
the neighbors of $x$ that are not neighbors of $y$. 
The correctness follows from the 
observation that if there is any interval overlapping 
with $x$ and another interval $z$ in the 
independent set,   
then $z$ must also overlap with $y$. 

\bigskip 

The algorithm computes $a(G)$ via the following formula.
\begin{equation}
a(G) = \max \; \left\{\;\frac{k}{k + i(x,k)}\;\mid\; x \in V \quad 
k \in \mathbb{N}\;\right\}.  
\end{equation}

\bigskip 

It is easy to see that the 
time complexity is bounded by $O(n^3)$. 
This proves the following theorem. 

\begin{theorem}
There exists an $O(n^3)$ algorithm to compute the 
tensor capacity for interval graphs. 
\end{theorem}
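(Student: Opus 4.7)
The plan is to combine the dynamic-programming table $i(x,k)$ already defined with T\'oth's equality $\Theta^T(G)=a^{\ast}(G)$ from Theorem~\ref{thm toth} and the piecewise formula~\eqref{eqn6}: once $a(G)$ is in hand, the tensor capacity is read off in constant time. So the real content is to justify the recurrence for $i(x,k)$, to explain why taking the max gives $a(G)$, and to bound the running time by $O(n^3)$.

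First I would observe that in an interval representation, any independent set is a family of pairwise disjoint intervals and therefore inherits a total left-to-right order from the real line; in particular it has a unique rightmost element. Consequently every independent set $I$ is counted by some pair $(x,k)$ with $x$ its rightmost interval and $k=|I|$, so $a(G)=\max_{x,k} k/(k+i(x,k))$ follows from the definition of $i(x,k)$ as the minimum of $|N(I)|$ over such $I$.

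The heart of the proof is correctness of the recurrence $i(x,k)=\min_y\{i(y,k-1)+|N(x)\setminus N(y)|\}$ where $y$ ranges over intervals whose right endpoint precedes the left endpoint of $x$. For an optimal $I$ of size $k$ with rightmost interval $x$ and second-rightmost $y$, removing $x$ leaves an independent set $I'=I\setminus\{x\}$ of size $k-1$ with rightmost interval $y$, and $|N(I)|=|N(I')|+|N(x)\setminus N(I')|$. The claim I need is the identity $N(x)\cap N(I')=N(x)\cap N(y)$. The inclusion $\supseteq$ is trivial since $y\in I'$. For $\subseteq$, take $w\in N(x)$ that also meets some $z\in I'$; if $z=y$ we are done, otherwise $z$ is strictly left of $y$, so $z$'s right endpoint is strictly less than $y$'s left endpoint while $x$'s left endpoint is strictly greater than $y$'s right endpoint. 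Since $w$ overlaps both $z$ and $x$, its interval must contain a point on either side of $y$, hence contains $y$ entirely, so $w\in N(y)$. This structural fact, specific to interval representations, is the one step that is not routine; everything else is bookkeeping. An inductive unwinding of the recurrence then gives $i(x,k)\leq |N(I)|$ for every feasible $I$ and, conversely, that each value of the recurrence is realised by some independent set, so equality holds.

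For the time bound, I would precompute $|N(x)\setminus N(y)|$ for all ordered pairs $(x,y)$ in $O(n^3)$ time from an adjacency matrix, and precompute for each $x$ the list of candidate predecessors $y$ lying strictly to its left. The table has $O(n^2)$ entries (with $k$ ranging up to $\alpha(G)\leq n$), each filled by a minimisation over $O(n)$ choices of $y$ using a table lookup, giving $O(n^3)$ overall; extracting $a(G)$ and then $\Theta^T(G)$ via \eqref{eqn6} is a further $O(n^2)$. The main obstacle, addressed in the previous paragraph, is the non-overcounting identity $N(x)\cap N(I')=N(x)\cap N(y)$, which is exactly the place where the interval structure is used.
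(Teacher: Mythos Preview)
Your proposal is correct and follows the same approach as the paper: both rely on the recurrence for $i(x,k)$ and the formula $a(G)=\max_{x,k}k/(k+i(x,k))$, with the interval-specific observation that any common neighbour of $x$ and some earlier element of $I'$ must already be a neighbour of the immediate predecessor $y$. Your write-up is in fact more rigorous than the paper's, which states this overcounting observation without proof and dismisses the time bound as ``easy to see''; your explicit verification of the identity $N(x)\cap N(I')=N(x)\cap N(y)$ and the $O(n^3)$ accounting fill in exactly what the paper omits.
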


\begin{remark}
We leave it as an open problem whether the time complexity 
for interval graphs, or for permutation graphs, 
can be reduced to $O(n^2)$.  
\end{remark}

\subsection{The tensor capacity for graphs of bounded treewidth}

Graphs of bounded treewidth were popularized by 
Robertson and Seymour during their work on graph 
minors~\cite{kn:robertson}.  

\begin{definition}
A graph has treewidth at most $k$ if it is a subgraph of a chordal 
graph with clique number $k+1$. 
\end{definition}

For each $k$, the class of graphs of treewidth at most $k$ is closed 
under minors. The class plays a major role 
in the graph minor theory because every class of graphs that is 
closed under taking minors, which does not contain all 
planar graphs, has treewidth bounded by some $k$. 
The class of graphs with treewidth at most $k$ 
is recognizable in linear 
time~\cite{kn:bodlaender,kn:kloks3}. For some background information 
on this class of graphs we refer to, 
eg,~\cite{kn:bodlaender2,kn:kloks3}. 
 
\bigskip 

\begin{theorem}
Let $k \in \mathbb{N}$. 
There exists a polynomial-time algorithm that computes the tensor 
capacity for the class of graphs that have treewidth at most $k$. 
\end{theorem}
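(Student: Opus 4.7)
By Theorem~\ref{thm toth} it suffices to compute $a(G)$,
and in fact to compute, for every $j \in \{0,1,\dots,n\}$, the value
\[\ell(j)=\min\;\{\;|N(I)| \;\mid\; I \text{ independent set with } |I|=j\;\},\]
since $a(G)=\max_j \; j/(j+\ell(j))$. The plan is to compute the
entire table $\ell(\cdot)$ by dynamic programming on a tree decomposition.

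First, using Bodlaender's linear-time algorithm we construct a nice
tree decomposition $(T,\{X_t\}_{t \in V(T)})$ of width at most $k$,
with leaf, introduce, forget, and join nodes, and with $X_{\mathrm{root}}=\es$.
For each node $t$, let $V_t$ be the union of the bags at $t$ and its
descendants. The dynamic programming state at $t$ is a labeling
$f:X_t \to \{\mathsf{I},\mathsf{N},\mathsf{O}\}$ together with an integer $j$; the
entry $g(t,f,j)$ stores the minimum, over independent sets $I \subseteq V_t$
with $|I|=j$ such that $f(v)=\mathsf{I}$ iff $v\in I$, $f(v)=\mathsf{N}$
iff $v \notin I$ and $v$ has a neighbor in $I \cap V_t$, and $f(v)=\mathsf{O}$
otherwise, of the quantity
$|\{\,v \in V_t \;:\; v \in N_G(I)\,\}|$. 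There are at most
$3^{k+1}(n+1)$ entries per bag, so the whole table has size polynomial in $n$.

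The transitions are the standard ones for tree decompositions, adapted
to the three-valued label. At an \emph{introduce} node that adds $v$, we
distinguish the label assigned to $v$: if $v$ is $\mathsf{O}$ we forbid
any neighbor of $v$ in the bag to be labeled $\mathsf{I}$; if $v$ is $\mathsf{N}$
we require at least one such neighbor; if $v$ is $\mathsf{I}$ we forbid
any, increment $j$, and in the old labeling relabel every $\mathsf{O}$
vertex of the old bag adjacent to $v$ as $\mathsf{N}$, paying the
corresponding increase in the count. A \emph{forget} node simply drops
$v$ from the bag; thanks to the tree-decomposition property, no future
vertex is adjacent to $v$, so an $\mathsf{O}$ label on $v$ is now
final (contributing nothing) and an $\mathsf{N}$ label on $v$ has
already been counted. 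At a \emph{join} node the two child labelings
must be compatible in the sense that $\mathsf{I}$ matches $\mathsf{I}$
and $\mathsf{O}$ may combine with $\mathsf{O}$ or with $\mathsf{N}$;
the parent label is the pointwise maximum under
$\mathsf{O}<\mathsf{N}<\mathsf{I}$, the sizes are added (subtracting
the number of $\mathsf{I}$ vertices in the shared bag to avoid double
counting), and similarly the two counts are added while subtracting
the number of bag vertices that are counted in both children.
Finally, $\ell(j)=g(\mathrm{root},\es,j)$, and the algorithm returns
$a^{\ast}(G)$ computed from $\max_j j/(j+\ell(j))$ via~\eqref{eqn6}.

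The only part requiring real care is the join, where the bookkeeping
for the count of $\mathsf{N}$-labeled and $\mathsf{I}$-labeled bag
vertices must avoid double counting, and where $\mathsf{O}/\mathsf{N}$
combinations must be correctly resolved; this is the one place where
the three-valued label really matters rather than the usual two-valued
one used for ordinary maximum independent set. Once this is handled,
each of the $O(3^{k+1}n)$ entries is computed from a constant number of
combinations at leaf, introduce, and forget nodes and from $O(3^{k+1}n)$
combinations at join nodes, giving total running time bounded by a
polynomial in $n$ (with the exponent independent of $k$ and the constant
depending on $k$). This proves the theorem.
\qed
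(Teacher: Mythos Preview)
Your proposal is correct and follows essentially the same approach as the paper: reduce to computing $\ell(j)=\min\{|N(I)|:|I|=j\}$ via T\'oth's theorem, then run dynamic programming over a nice tree decomposition with a three-valued labeling of bag vertices ($\mathsf{I}$/$\mathsf{N}$/$\mathsf{O}$) while tracking both $|I|$ and $|N(I)|$. Your treatment of the introduce and join transitions is in fact more careful than the paper's sketch (e.g., explicitly relabeling $\mathsf{O}\to\mathsf{N}$ when an $\mathsf{I}$-vertex is introduced, and handling the $\mathsf{O}/\mathsf{N}$ combinations at joins), but the underlying idea and the resulting complexity are the same.
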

\begin{proof}
Consider a nice tree-decomposition of width $k$~\cite{kn:kloks3}. 
Each node of the decomposition tree is of four possible types. 
The algorithm computes a table which contains some  
information of the graphs induced by the vertices that appear 
in bags of the subtree. For these induced subgraphs the table contains, 
for each value $k$, the minimal number 
of neighbors of an independent set of cardinality $k$. 
each table entry further specifies 
\begin{enumerate}[\rm (a)]
\item the vertices in the bag 
that are contained in the independent set, and
\item  the vertices in the 
bag that are neighbors of vertices in the independent set. 
\end{enumerate}
We describe next how this information is computed for 
each type of the nodes in the tree-decomposition. 
\begin{description}
\item[Start node.] A start node $s$ is a leaf of the 
decomposition tree. In that case the induced subgraph is just the 
subgraph induced by the vertices that appear in the bag, say $S$. 
In that case, the table contains all the independent sets 
and all the neighbors of those independent sets. 
\item[Join node.]
A join node $s$ has exactly two children, say $s_1$ and $s_2$. 
The three bags are the same, say $S=S_1=S_2$. 
To construct the table for $S$, consider table entries 
at $s_1$ and $s_2$ that have identical independent set 
in $S_1$ and $S_2$. For the neighborhoods in $S$ the algorithm 
takes the union of the neighbors indicated in $S_1$ and $S_2$. 
The total number of vertices in the independent set is the sum 
of the numbers at the nodes $s_1$ and $s_2$, avoiding double 
counting the number that are in $S$. The number of neighbors is 
also the union of the neighbors in the subtree at $s_1$ and $s_2$, 
again avoiding double counting the neighbors that are in both 
$S_1$ and $S_2$. 
\item[Introduce node.]
An introduce node $s$ has exactly one child $s^{\prime}$. 
The bag $S$ of $s$ has exactly one vertex more than the bag $S^{\prime}$ 
of $s^{\prime}$. Say $S =S^{\prime} \cup \{x\}$. 
All neighbors of $x$ are in $S$. To compute the table at the node $s$ 
we consider the cases where $x$ is in the independent set, 
in the neighborhood of the independent set, or unrelated to the 
independent set. Since all neighbors of $x$ are in $S$, the 
table entries at $s^{\prime}$ are easily extended to make up 
table entries for the node $s$. 
\item[Forget node.]
A node $s$ is a forget node if it has exactly one child, 
say $s^{\prime}$, and the bag of $S$ has exactly one vertex $x$ 
less than the bag $S^{\prime}$. Say $S^{\prime} =S \cup \{x\}$. 
The table at $s$ is easily obtained from the table at $s^{\prime}$. 
The values for the independent sets and their numbers of neighbors 
don't change. Simply the information whether the vertex $x$ is a 
vertex of the independent set, or if it is a neighbor 
of the independent set, or if it is unrelated 
to the independent set, disappears. Of course, 
this may cause some table entries 
to coincide.        
\end{description}
This describes the dynamic programming algorithm. 
The timebound is determined by the size of the tables. 
each table entry is characterized by a 3-coloring of the 
vertices in the bag; namely as a vertex of the independent set, 
as a neighbor of the independent set, or as a vertex which is not related 
to the independent set. Since each bag contains at most $k+1$ vertices, 
there are $O(3^{k+1})$ different types. For each type, the table 
entry contains two numbers, namely the total size of the 
independent set and the total number of neighbors. 
Thus the size of each table is bounded by $O(3^{k+1}\cdot n^2)$. 

\medskip 

\noindent
The decomposition tree has $O(n)$ different nodes. 
Each table is computed in constant time per table entry. 
Thus the total time is bounded by $O(3^{k+1}\cdot n^3)$ time. 
\qed\end{proof}

\bigskip 

Via Baker's method we easily obtain the following 
result~\cite{kn:baker2}. For brevity we omit the (standard) details.  

\begin{theorem}
There exists a PTAS to approximate the ultimate 
categorical independence ratio in planar graphs. 
\end{theorem}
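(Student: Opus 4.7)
The plan is to apply Baker's layering technique in combination with the bounded-treewidth algorithm from the previous subsection. By Theorem~\ref{thm toth} and Corollary~\ref{cor pol a^T}, one may first decide in polynomial time whether $\Theta^T(G) = 1$; if so, return $1$. Otherwise $\Theta^T(G) = a(G) \leq \tfrac12$. Given $\epsilon > 0$, set $k = \lceil 2/\epsilon \rceil$, compute BFS layers $L_0, L_1, \dots$ from an arbitrary root of $G$, and for each residue $i \in \{0,\dots,k\}$ delete every pair $L_{i+j(k+1)} \cup L_{i+j(k+1)+1}$ for $j = 0, 1, \ldots$. The resulting $G_i$ is a disjoint union of $(k-1)$-outerplanar components, each of treewidth $O(k)$.

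For each component $C$ of $G_i$, let $L^-_C$ and $L^+_C$ be the two deleted layers adjacent to $C$, and put $S_C = C \cup L^-_C \cup L^+_C$. Since edges in a BFS layering only connect consecutive layers, every $G$-neighbor of any independent set $I \subseteq V(C)$ lies in $S_C$; moreover $S_C$ is itself $(k+1)$-outerplanar, and hence has treewidth $O(k)$. Run the treewidth DP on $S_C$ with the single additional constraint that no vertex of $L^-_C \cup L^+_C$ is placed ``in $I$''; this computes, for every size $k'$, the minimum of $|N_G(I)|$ over independent sets $I \subseteq V(C)$ with $|I|=k'$, and therefore $\max_{I \subseteq V(C)} |I|/(|I|+|N_G(I)|)$. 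The algorithm outputs the maximum of this quantity over all components $C$ and all residues~$i$.

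For the analysis, let $I^*$ realize $\alpha^* = a(G)$. By pigeonhole, for some $i$ the deleted layers contain at most a $\tfrac{2}{k+1}$-fraction of $I^* \cup N(I^*)$. Set $I' = I^* \setminus (\text{deleted})$ and decompose $I' = \bigsqcup_C I'_C$ according to the components of $G_i$. Because we deleted \emph{pairs} of consecutive layers, no two distinct components share a deleted layer among their adjacent ones, so the sets $N_G(I'_C) \cap (L^-_C \cup L^+_C)$ are pairwise disjoint and $|N_G(I')| = \sum_C |N_G(I'_C)|$. Hence the per-component ratios $r_C = |I'_C|/(|I'_C|+|N_G(I'_C)|)$ form a weighted average exactly equal to the $G$-ratio of $I'$, which is at least $\alpha^* - \tfrac{2}{k+1} \geq \alpha^* - \epsilon$; thus $\max_C r_C$, and a fortiori the algorithm's output, is at least $\alpha^* - \epsilon$. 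Since every candidate independent set found by the DP is valid in $G$, the output also lies below $\alpha^*$. Finally, $a(G) \geq r(G) \geq 1/4$ for planar $G$ by the four-colour theorem, so an additive $\epsilon$-error translates into a multiplicative $(1-4\epsilon)$-approximation, giving a PTAS after rescaling~$\epsilon$.

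The main obstacle is the external-neighbor bookkeeping. Deleting only a single layer between pieces would leave the deleted vertices able to be $G$-neighbors of independent-set vertices on both sides, breaking the disjointness claim and allowing only the much weaker bound $\max_C r_C \geq \alpha^*/(2-\alpha^*)$ (which does not beat roughly $1/3$ even when $\alpha^* = 1/2$). Removing layers in pairs cleanly isolates the external neighbors of each super-piece at the cost of only a factor of two in the pigeonhole loss, which is absorbed into the choice $k = \lceil 2/\epsilon \rceil$.
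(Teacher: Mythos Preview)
Your proposal is correct and follows the Baker-layering approach the paper explicitly names but declines to spell out (``Via Baker's method \ldots\ we omit the (standard) details''). You have supplied those details faithfully, and in fact more carefully than the word ``standard'' would suggest: the need to delete layers in \emph{pairs} so that the external $G$-neighbourhoods of distinct pieces are disjoint---without which the mediant/weighted-average step fails---is a genuine subtlety specific to the ratio objective $|I|/(|I|+|N(I)|)$, and your treatment of it, together with the preliminary $\Theta^T(G)=1$ test and the $a(G)\geq 1/4$ lower bound to convert additive into multiplicative error, completes the argument the paper only gestures at.
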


\section{NP-Completeness of independence in categorical 
products of planar graphs}
\label{hardness}

\begin{theorem} 
\label{thm:npc}
Let $G$ be a planar graph of maximum vertex degree 3. 
It is NP-complete to 
compute the maximum independent set of $G \times K_4$.
\end{theorem}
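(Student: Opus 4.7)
The plan is to reduce from the maximum independent set problem on planar graphs of maximum degree $3$, which is classically NP-complete. The engine of the reduction is the identity
\[\alpha(G \times K_4) = 4\alpha(G)\quad\text{whenever $G$ has maximum degree at most $3$.}\]
Given this, the polynomial-time map $G \mapsto G \times K_4$ (which blows up the vertex set only by a factor of $4$) transports an instance of the planar max-degree-$3$ independent set problem to an instance on $G \times K_4$; membership in NP is immediate.

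The lower bound $\alpha(G \times K_4) \ge 4\alpha(G)$ follows by taking a maximum independent set $A$ in $G$ and observing that $A \times V(K_4)$ is independent in $G \times K_4$, which is also a direct consequence of~\eqref{eqn0}. For the reverse inequality, let $I$ be any independent set in $G \times K_4$ and, for each $g \in V(G)$, set $I_g = \{k \in V(K_4) : (g,k) \in I\}$. The categorical adjacency rule forces the following structural fact: if $g_1 g_2 \in E(G)$ and both $I_{g_1},I_{g_2}$ are non-empty, then every $k \in I_{g_1}$ and $k' \in I_{g_2}$ must satisfy $k = k'$ (else they would be adjacent in $G \times K_4$), so $|I_{g_1}| = |I_{g_2}| = 1$ and $I_{g_1} = I_{g_2}$. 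Partition $V(G)$ as $A = \{g : |I_g| \ge 2\}$, $B = \{g : |I_g| = 1\}$, $C = \{g : I_g = \varnothing\}$. Then $A$ is independent in $G$ and $N_G(A) \subseteq C$, whence
\[|I| = \sum_{g \in A}|I_g| + |B| \;\le\; 4|A| + \bigl(n - |A| - |N_G(A)|\bigr) \;=\; n + 3|A| - |N_G(A)|.\]

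The crux is then to show $\max_{A \text{ ind.\ in } G}\bigl(3|A| - |N_G(A)|\bigr) = 4\alpha(G) - n$ under the hypothesis $\Delta(G) \le 3$. The inequality $\ge$ is immediate from a maximum (hence dominating) independent set. For $\le$, extend any independent $A$ greedily to a maximal $A^{\ast} \supseteq A$, one vertex at a time: adding $v \notin A \cup N_G(A)$ changes the objective by $3 - |N_G(v) \setminus N_G(A)| \ge 3 - \deg_G(v) \ge 0$. Hence the maximum is attained on some maximal, hence dominating, set $A^{\ast}$, for which $|N_G(A^{\ast})| = n - |A^{\ast}|$; the objective then equals $4|A^{\ast}| - n \le 4\alpha(G) - n$. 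Substituting back yields $|I| \le 4\alpha(G)$, completing the identity.

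The main obstacle is pinpointing exactly why the degree hypothesis appears: the monotone-extension step requires $\deg_G(v) \le 3$, which is precisely the bound the theorem assumes. Without it, non-dominating sets $A$ could dominate $3|A| - |N_G(A)|$ and the identity $\alpha(G \times K_4) = 4\alpha(G)$ can fail; this is why the coefficient $3$ in $3|A| - |N_G(A)|$ and the value $|V(K_4)|=4$ are perfectly matched to the maximum degree. Once this monotonicity lemma is in hand, the partition, the reduction, and NP membership are mechanical.
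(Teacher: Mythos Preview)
Your argument is correct and establishes the identity $\alpha(G\times K_4)=4\alpha(G)$ for $\Delta(G)\le 3$ cleanly; the reduction then follows. The paper, however, takes a different route for the upper bound. Rather than a counting argument, it runs an \emph{exchange} procedure directly on an independent set $S'\subseteq V(G\times K_4)$: whenever the projection $S=\{g:(g,t)\in S'\text{ for some }t\}$ contains adjacent vertices, it picks one such $s$, observes (as you do) that every neighbour $s_i\in S$ of $s$ satisfies $|I_{s_i}|=1$ with $I_{s_i}=I_s$, and then swaps the (at most three) vertices $(s_i,t)$ for the missing copies $(s,t')$ above $s$. This keeps $S'$ independent and of the same size while strictly shrinking $S$; iterating yields an $S'$ of the form $S\times V(K_4)$ with $S$ independent and $|S|\ge |S'|/4$.

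What each buys: the paper's swap argument is constructive---it actually produces a witness independent set in $G$ from one in $G\times K_4$---but is a bit informal about the sub-cases (fewer than three neighbours, termination). Your approach is a tighter inequality chain and isolates precisely where the hypothesis $\Delta(G)\le 3$ is consumed, namely in the monotonicity of $3|A|-|N_G(A)|$ under greedy extension. It also generalises verbatim to show $\alpha(G\times K_{d+1})=(d{+}1)\,\alpha(G)$ whenever $\Delta(G)\le d$, by replacing $3$ with $d$ and $4$ with $d{+}1$ throughout.
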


\begin{proof}
Clearly, the problem is in NP.
We show that, to decide whether 
there is an independent set of size $4k$ is NP-hard for 
for $G \times K_4$ when $G$ is a planar graph of 
maximal degree three.

\medskip 

\noindent
We reduce the decision problem of deciding whether 
there is an independent set for $G$ of size $k$, 
which is known to be NP-complete~\cite{planar-npc}, to this problem.

\medskip 

\noindent
Let $K_4 = \{ t_1, t_2, t_3, t_4 \}$.
Now suppose that if $G$ has an independent set $S$ of size $k$,
then for each vertex $s$ in $S$, we select the four vertices 
$(s,t_1), (s,t_2), (s,t_3),$ and $(s,t_4)$ in $G \times K_4$.
Clearly, the selected $4s$ vertices $S'$ is an 
independent set in $G \times K_4$.

\medskip 

\noindent
On the other hand, suppose that $G \times K_4$ has an 
independent set $S'$ 
of size $4s$ in $G \times K_4$. 
Unfortunately, the related vertices $S$ in $G$ corresponding
to the vertices in $S'$ are not necessarily independent. 
We transform $S$ 
so that it becomes independent. 

\medskip 

\noindent 
For any vertex $s$ in $S$, 
it has at most three neighbors in $S$,
say $s_1, s_2$ and $s_3$. 
Without loss of generality, 
we assume that $(s,t_4)$ belongs to $S'$.
Then clearly, any of $(s_i, t_j)$ where $i=1,2,3$ and $j=1,2,3$ 
does not belong to $S'$
because otherwise, $(s_i, t_j)$ would be 
adjacent to $(s, t_4)$, which is impossible
since $S'$ is an independent set.
Thus the three vertices $(s_1, t_4)$, $(s_2, t_4)$ and $(s_3, t_4)$ 
must all belong to $S'$.

\medskip 

\noindent
We transform these three vertices in $S'$ to 
become $(s, t_1)$, $(s, t_2)$ and $(s, t_3)$.
It is clear that the resulted $S'$ is still independent.
Consequently, we also remove all $s_1, s_2, s_3$ from $S$.
If the new $S$ is not an independent set, 
then we apply the above transformation step on
another vertex in $S$ which has at least one more neighbor in $S$.
At the end, we obtain an independent set $S$ such that $|S| \geq 4k/4 = k$
since the size of $K_4$ is 4.

\medskip

\noindent
This completes our hardness proof.
\qed\end{proof}

\section{The ultimate categorical independent domination ratio 
for complete multipartite graphs}
\label{section domination ratio}

In this section we assume that the graphs have no isolated 
vertices. 

\begin{definition}
Let $G$ be a graph. The independent domination number 
$i(G)$ is the smallest cardinality of an independent dominating set 
in $G$. That is, $i(G)$ is the cardinality of a smallest 
maximal independent set in $G$. 
\end{definition}

\bigskip 

In~\cite{kn:farber}, 
Farber studies the following `independent domination capacity' 
for the strong product $G \boxtimes \dots \boxtimes G$.  
\[i_s(G)=\lim_{k \rightarrow \infty} \sqrt[k]{i(\boxtimes^k G}).\]  
For chordal graphs $G$ the fractional independent domination number,  
$i_f(G)$,   
equals the independent domination number. Farber shows that there 
are infinitely many trees $T$ for which $i_s(T) < i(T)$. 
It seems difficult to get a grip on the 
parameter. Farber conjectures that $i_s(C_4) = \sqrt[3]{4}$. 

\bigskip 

In the rest of this section we concentrate on the 
categorical product. 
To start with,    
the following conjecture appears in~\cite{kn:nowakowski}. 

\begin{conjecture}
\label{conj 2}
For all graphs $G$ and $H$ 
\begin{equation}
\label{eqn12}
i(G \times H) \geq i(G) \cdot i(H).
\end{equation}
\end{conjecture}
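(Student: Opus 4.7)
The plan is to analyse a minimum independent dominating set $D$ in $G\times H$ and to bound $|D|$ from below by $i(G)\cdot i(H)$ via projection onto each factor. For $g\in V(G)$ set $D_g=\{\,h\in V(H)\mid (g,h)\in D\,\}$ and symmetrically $D^h=\{\,g\in V(G)\mid (g,h)\in D\,\}$ for each $h\in V(H)$. The first easy step is: if $D_g=\varnothing$ then for any $h$ the vertex $(g,h)$ must be dominated from another row, forcing a $G$-neighbour of $g$ to have a non-empty row; hence the set of rows $g$ with $D_g\neq\varnothing$ is a dominating set of $G$, and symmetrically for columns. On the independence side, two elements $(g,h),(g',h)\in D$ sharing a column force $gg'\notin E(G)$, so each $D^h$ is an \emph{independent} set in $G$ and each $D_g$ is independent in $H$. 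This already yields $|D|\geq \max\{\gamma(G),\gamma(H)\}$ but not the product bound.

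The central step would then be a column-by-column argument. Fix a minimum independent dominating set $T$ of $H$ with $|T|=i(H)$ and try to show that $|D^h|\geq i(G)$ for every $h\in T$. Since each $D^h$ is already independent in $G$, one only needs to prove that $D^h$ is also dominating in $G$; summation over $h\in T$ would then yield $|D\cap(V(G)\times T)|\geq i(G)\cdot i(H)$, and hence the conjecture.

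The main obstacle is that a single $(g,h)\in D$ can dominate vertices in many different columns through the $H$-neighbours of $h$, so the work of dominating the column at height $h$ is partly done by elements $(g',h')\in D$ with $h'\in N_H(h)$, and $D^h$ on its own need not dominate $G$. I would try to repair this by an amortised charging scheme: for every $g\in V(G)$ and every $h\in T$ pick a witness $(g',h')\in D$ with $g'\in N_G[g]$ and $h'\in N_H[h]$, and then distribute a unit of charge from $D^h$ to the witnesses in such a way that no $(g',h')\in D$ ever collects more than one unit per element of an independent cover of $N_G[g']$. If the charging goes through, $|D^h|\geq i(G)$ follows on average.

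I expect a frontal attack to fall short; this is the tensor-product analogue of the Vizing-type bound for the Cartesian product, which has resisted proof for decades. I would therefore first verify the plan on well-covered graphs, where $\gamma(G)=i(G)$ and the naive projection almost already works; then on graphs $H$ admitting a domatic partition into $i(H)$ parts, where the choice of $T$ can be averaged over a partition of $V(H)$; and then on the complete multipartite case analysed explicitly in Appendix~\ref{section domination ratio}. If the charging step fails on even one of these, I would take it as evidence in favour of a counterexample and hunt for one among small products of vertex-transitive graphs such as $C_5\times C_5$.
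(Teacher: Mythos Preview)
The paper does \emph{not} prove this statement: it is labelled a \texttt{conjecture} environment and attributed to Nowakowski and Rall~\cite{kn:nowakowski}. There is no ``paper's own proof'' to compare against; the paper merely records the conjecture and then moves on to the independent-domination ratio.

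Your proposal is not a proof but a plan, and you are candid about this. The analysis is sound up to the point where it stalls: each fibre $D_g$ is indeed independent in $H$, each $D^h$ is independent in $G$, and the non-empty rows do form a dominating set in $G$. The genuine gap is exactly where you locate it. The claim that $D^h$ is dominating in $G$ for every $h$ in a fixed minimum independent dominating set $T$ of $H$ is false in general, and the proposed charging scheme does not close the hole: a single element $(g',h')\in D$ can legitimately serve as witness for many pairs $(g,h)$ with $h\in T\cap N_H[h']$ and $g\in N_G[g']$, and there is no evident bound of one unit per ``element of an independent cover of $N_G[g']$'' that the adversary must respect. Your own fallback---test the scheme on small vertex-transitive products and be prepared to hunt for a counterexample---is the right attitude here; as you note, this is the categorical analogue of a Vizing-type inequality, and such product bounds are notoriously resistant to projection-and-charge arguments.

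In short: nothing to compare, the statement is open in the paper, and your plan correctly identifies both the natural first moves and the obstruction.
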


\bigskip 

\begin{definition}
The independent domination ratio of a graph $G$ 
is defined as 
\begin{equation}
\label{eqn13}
r_i(G) = \frac{i(G)}{|V(G)|}.
\end{equation}
\end{definition}

In~\cite[Section 5.2.1]{kn:finbow}, 
Finbow studies the ultimate categorical independent domination 
ratio. 

\begin{lemma}
\label{lm indep dom prod}
Let $G$ and $H$ be graphs without isolated vertices. 
Then 
\begin{equation}
\label{eqn14}
i(G \times H) \leq i(G) \cdot |V(H)|.
\end{equation}
\end{lemma}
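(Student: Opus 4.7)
The plan is to exhibit an explicit independent dominating set in $G \times H$ of size $i(G) \cdot |V(H)|$, namely the ``cylinder'' over a minimum independent dominating set of $G$.

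First I would let $D$ be a minimum independent dominating set of $G$, so that $|D| = i(G)$, and define
\[ D^{\ast} = D \times V(H) = \{\;(d,h)\;\mid\; d \in D,\; h \in V(H)\;\}. \]
This set clearly has cardinality $i(G) \cdot |V(H)|$, so it suffices to verify that $D^{\ast}$ is both independent and dominating in $G \times H$.

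Independence is immediate from the definition of the categorical product: two vertices $(d_1,h_1)$ and $(d_2,h_2)$ in $D^{\ast}$ are adjacent only if $\{d_1,d_2\} \in E(G)$, which is impossible since $D$ is independent in $G$. For domination, pick an arbitrary $(g,h) \notin D^{\ast}$; then $g \notin D$, so there exists $d \in D$ with $\{g,d\} \in E(G)$ because $D$ dominates $G$. Here is where the no-isolated-vertex hypothesis on $H$ comes in: since $h$ is not isolated in $H$, we can choose some $h^{\prime} \in V(H)$ with $\{h,h^{\prime}\} \in E(H)$. Then $(d,h^{\prime}) \in D^{\ast}$ and $(d,h^{\prime})$ is adjacent to $(g,h)$ in $G \times H$, so $(g,h)$ is dominated.

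There is essentially no obstacle here; the only subtle point is remembering that $G \times H$ can have isolated vertices if $H$ does, which would break the domination argument, and this is precisely why the hypothesis ``without isolated vertices'' is imposed. The conclusion $i(G \times H) \leq |D^{\ast}| = i(G) \cdot |V(H)|$ then follows.
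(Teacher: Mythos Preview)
Your proof is correct and follows exactly the same approach as the paper: take a minimum independent dominating set $A$ of $G$, form the cylinder $A \times V(H)$, and observe that this set is independent and dominating in $G \times H$. If anything, your write-up is more careful than the paper's, which only states that the set is dominating and leaves the (trivial) independence implicit.
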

\begin{proof}
Let $A$ be a minimum independent dominating set in $G$. 
Let 
\[S=\{\;(a,h)\;|\; a \in A \quad\text{and}\quad h \in V(H)\;\}.\] 
Since $H$ has no isolated vertices, 
$S$ is a dominating set in $G \times H$  
with cardinality 
\[|S|=i(G)|V(H)|.\]  
This proves the lemma.
\qed\end{proof}

\begin{lemma}
\label{indep dom cap}
The sequence $r_i(G^k)$, $k \in \mathbb{N}$, is 
non-increasing. Thus the limit 
\[I(G)=\lim_{k \rightarrow \infty} r_i(G^k)\] 
exists.
\end{lemma}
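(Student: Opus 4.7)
The plan is to show directly that $r_i(G^{k+1}) \leq r_i(G^k)$ for every $k \geq 1$, which, combined with the trivial lower bound $r_i(G^k) \geq 0$, yields the existence of the limit by the monotone convergence theorem.

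First I would apply Lemma~\ref{lm indep dom prod} with the roles of $G$ and $H$ played by $G^k$ and $G$ respectively. Before doing so, I need the small preliminary observation that if $G$ has no isolated vertex, then neither does $G^k$: indeed, given any vertex $(g_1,\dots,g_k) \in V(G^k)$, picking a neighbor $g_i'$ of each $g_i$ in $G$ produces a neighbor $(g_1',\dots,g_k')$ of $(g_1,\dots,g_k)$ in $G^k$. This is the only hypothesis check needed to invoke Lemma~\ref{lm indep dom prod} for the pair $(G^k, G)$.

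With that in hand, Lemma~\ref{lm indep dom prod} gives
\[
i(G^{k+1}) \;=\; i(G^k \times G) \;\leq\; i(G^k)\cdot |V(G)|.
\]
Dividing both sides by $|V(G^{k+1})| = |V(G^k)|\cdot |V(G)|$ yields
\[
r_i(G^{k+1}) \;=\; \frac{i(G^{k+1})}{|V(G^k)|\cdot |V(G)|}
\;\leq\; \frac{i(G^k)}{|V(G^k)|} \;=\; r_i(G^k),
\]
so the sequence is non-increasing. Since $r_i(G^k) \geq 0$ for all $k$, the sequence is bounded below, and hence the limit $I(G) = \lim_{k \to \infty} r_i(G^k)$ exists.

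There is no real obstacle here; the lemma is essentially a packaging of Lemma~\ref{lm indep dom prod} into a monotonicity statement, and the only subtlety is confirming that the hypothesis of Lemma~\ref{lm indep dom prod} (absence of isolated vertices) propagates to all powers $G^k$, which follows from a one-line verification as above.
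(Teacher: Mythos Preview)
Your proof is correct and follows essentially the same approach as the paper: apply Lemma~\ref{lm indep dom prod} to the pair $(G^{k},G)$ (the paper writes it with $G^{k-1}$ and $G$) to get $i(G^{k+1})\leq i(G^k)\cdot|V(G)|$, then divide through by $|V(G)|^{k+1}$. Your version is in fact a bit more careful, since you explicitly verify that $G^k$ has no isolated vertices so that the hypotheses of Lemma~\ref{lm indep dom prod} are met, and you note the lower bound $r_i(G^k)\geq 0$ needed for convergence; the paper's proof leaves both of these implicit.
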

\begin{proof}
Notice that, by Lemma~\ref{lm indep dom prod},  
\[i(G^k) \leq i(G^{k-1}) \cdot |V(G)| 
\Rightarrow 
\frac{i(G^k)}{|V(G)|} \leq i(G^{k-1}) 
\Rightarrow 
\frac{i(G^k)}{|V(G)|^k} \leq \frac{i(G^{k-1})}{|V(G)|^{k-1}}.\]
This proves the claim. 
\qed\end{proof}

\bigskip 

\begin{remark}
Finbow shows in~\cite[Page 57]{kn:finbow} 
that for the complete bipartite graph 
\[I(K_{m,m})= \lim_{k \rightarrow \infty} r_i(K^k_{m,m})= 
\lim_{k \rightarrow \infty} \frac{i(\times^k K_{m,m})}{(2m)^k} = 
\lim_{k \rightarrow \infty} \frac{2^{k-1}\cdot m^k}{2^k \cdot m^k} =
\frac{1}{2}.\]
\end{remark}

\bigskip 

\begin{lemma}
\label{lm unbalanced cb}
Let $G \simeq K(m,n)$ be the complete bipartite graph with 
$m$ and $n$ vertices in the two color classes. 
Then 
\begin{equation}
\label{eqn15}
r_i(G^k)= \frac{1}{(m+n)^k} \cdot \sum_{\ell=0}^{k-1} 
\binom{k-1}{\ell} \cdot \min\;\{\;m^{k-\ell}n^{\ell},\;m^{\ell} n^{k-\ell}\;\}.
\end{equation}
This implies that $I(K_{m,n})=0$ 
when $0 < m < n$. 
\end{lemma}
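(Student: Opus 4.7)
The plan is to decompose $K(m,n)^k$ into its connected components, each of which will turn out to be a complete bipartite graph, read off $i$ on each component, and sum. Since $i$ is additive over disjoint unions of graphs without isolated vertices and $i(K_{p,q})=\min\{p,q\}$ (the only maximal independent sets in a complete bipartite graph are the two colour classes), the entire computation reduces to pure bookkeeping.

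Label the colour classes of $K(m,n)$ by $A$ (size $m$) and $B$ (size $n$), and assign to each vertex $v=(v_1,\ldots,v_k)$ of $G^k$ its \emph{type} $\tau(v)\in\{A,B\}^k$ recording which colour class each coordinate lies in. By the categorical adjacency rule, two vertices $v$ and $w$ are adjacent iff for every $i$ the pair $(v_i,w_i)$ uses both colour classes, i.e.\ iff $\tau(w)$ is the coordinate-wise complement $\bar\tau(v)$. Hence for each complementary pair $\{\tau,\bar\tau\}$ the vertices of those two types induce a complete bipartite graph with no edges going elsewhere; the $2^{k-1}$ such pairs partition the vertex set into components. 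If $\tau$ has $\ell$ coordinates equal to $B$, the $\tau$-side of the corresponding component has $m^{k-\ell}n^{\ell}$ vertices and the $\bar\tau$-side has $m^{\ell}n^{k-\ell}$ vertices, contributing $\min\{m^{k-\ell}n^{\ell},\,m^{\ell}n^{k-\ell}\}$ to $i(G^k)$. To index the pairs without double-counting I would take as the canonical representative of $\{\tau,\bar\tau\}$ the one whose first coordinate is $B$; the remaining $k-1$ coordinates yield $\binom{k-1}{\ell}$ representatives with exactly $\ell$ coordinates equal to $B$. Summing over $\ell\in\{0,\ldots,k-1\}$ and dividing by $(m+n)^k$ gives~\eqref{eqn15}.

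For the vanishing claim, the product identity $(m^{k-\ell}n^{\ell})(m^{\ell}n^{k-\ell})=(mn)^{k}$ forces $\min\{m^{k-\ell}n^{\ell},m^{\ell}n^{k-\ell}\}\le (mn)^{k/2}$, and combined with $\sum_{\ell=0}^{k-1}\binom{k-1}{\ell}=2^{k-1}$ this yields
\[
r_i(G^k)\;\le\;\frac{1}{2}\left(\frac{2\sqrt{mn}}{m+n}\right)^{k}.
\]
Strict AM-GM (applied with $0<m<n$) gives $2\sqrt{mn}/(m+n)<1$, so the geometric bound tends to zero and $I(K_{m,n})=0$. The only step requiring any care is the structural observation that complementary types span the components and are joined completely; this is immediate from the product's adjacency rule but is the linchpin of the whole argument.
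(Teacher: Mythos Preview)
Your argument is correct and, for the formula~\eqref{eqn15}, essentially parallel to the paper's: the paper derives the component decomposition inductively from the identity $K(m,n)\times K(p,q)=K(mp,nq)\oplus K(mq,np)$, whereas you obtain it in one stroke via the type vector $\tau(v)\in\{A,B\}^k$. These are two presentations of the same structural fact. (One cosmetic slip: if the canonical representative has first coordinate $B$ and \emph{total} number of $B$'s equal to $\ell$, the count is $\binom{k-1}{\ell-1}$, not $\binom{k-1}{\ell}$; taking first coordinate $A$ instead makes your indexing line up exactly with~\eqref{eqn15}. Either convention gives the same sum by the symmetry $\ell\leftrightarrow k-\ell$.)

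Where you genuinely diverge from the paper is in the asymptotic estimate. The paper sets $m=\alpha n$, splits the sum at $\ell=k/2$ according to which argument of the minimum is smaller, merges adjacent binomial coefficients via $\binom{k-1}{\ell}+\binom{k-1}{\ell-1}=\binom{k}{\ell}$, and then bounds the resulting partial binomial sum by something of order $\sqrt{k}\cdot\bigl(2\sqrt{\alpha}/(1+\alpha)\bigr)^k$. Your route---bound each minimum by the geometric mean $(mn)^{k/2}$ and sum the coefficients to $2^{k-1}$---is shorter, avoids the partial-sum estimate entirely, and yields the cleaner inequality $r_i(G^k)\le\tfrac{1}{2}\bigl(2\sqrt{mn}/(m+n)\bigr)^k$. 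Both conclude via the same AM--GM strictness $2\sqrt{mn}<m+n$ for $m\neq n$, but your bound is sharper (no polynomial prefactor) and more transparent.
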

\begin{proof}
According to~\cite[Page~57]{kn:finbow}, 
\[K(m,n) \times K(p,q)=K(mp,nq) \oplus K(mq,np).\]
Via induction, it follows that 
\[\times^k K(m,n)=\sum_{\ell=0}^{k-1} \binom{k-1}{\ell} \;  
K(m^{k-\ell}n{^\ell},m^{\ell}n^{k-\ell}),\] 
where the sum denotes union. 
For a complete bipartite graph $K(p,q)$ the 
independent domination number is $\min\{p,q\}$. 

\medskip 

\noindent 
Let $m=\alpha \cdot n$ for some $0 < \alpha < 1$. 
Then~\eqref{eqn15} yields 
\begin{eqnarray*}
I(G) &=& \frac{1}{(m+n)^k} \cdot \left[ 
\sum_{0\leq \ell \leq k/2} \binom{k-1}{\ell} m^{k-\ell} n^{\ell} 
+ \sum_{1 \leq \ell < k/2} \binom{k-1}{\ell-1} m^{k-\ell}n^{\ell} \right]  \\
&\leq & \frac{1}{(m+n)^k} \cdot 
\sum_{0 \leq \ell \leq k/2} \binom{k}{\ell} m^{k-\ell}n^{\ell}  \\
&\leq& \sqrt{k/(2\cdot \pi)}\cdot \left( \frac{2 \sqrt{\alpha}}{1+\alpha} 
\right)^k \rightarrow 0 \quad (k \rightarrow \infty).
\end{eqnarray*}
This proves the lemma.   
\qed\end{proof}

\begin{theorem}
Let $G$ be a complete multipartite graph with $t$ color classes 
of size 
\[n_1 \leq \dots \leq n_t.\] 
Then $I(G)=0$ unless $t=2$ and $n_1=n_2$, 
in which case $I(G)=\frac{1}{2}$. 
\end{theorem}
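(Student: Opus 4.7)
The plan is to use results already in the excerpt to handle $t=2$ and then to treat $t\ge 3$ by an explicit parity construction inside $\{1,2\}^k\subseteq[t]^k$. When $t=2$ and $n_1=n_2$, the value $I(G)=\tfrac12$ is the remark immediately following Lemma \ref{lm unbalanced cb}; when $t=2$ and $n_1<n_2$, $I(G)=0$ is Lemma \ref{lm unbalanced cb} itself.

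For $t\ge 3$ I would first identify $G^k$ with a blowup of $K_t^k$. Partitioning $V(G^k)$ by the colour pattern $c\in[t]^k$, the fibre over $c$ is an independent set of size $\prod_i n_{c_i}$, and two distinct fibres are joined by a complete bipartite graph precisely when their patterns are adjacent in $K_t^k$. A short check shows that any independent dominating set of $G^k$ is the union of the \emph{full} fibres over some independent dominating set $S$ of $K_t^k$: a missed vertex inside a used fibre could only be dominated from an adjacent fibre, and independence of $S$ forbids such an adjacent pattern from lying in $S$. Writing $p_i=n_i/N$, this gives
\[
 r_i(G^k)\;=\;\min_S\;\sum_{c\in S}\prod_{i=1}^k p_{c_i},
\]
with $S$ ranging over independent dominating sets of $K_t^k$.

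For odd $k$ I would take $S_k=\{c\in\{1,2\}^k:c_1+\cdots+c_k\text{ is odd}\}$. For independence in $K_t^k$: two elements of $\{1,2\}^k$ sharing a parity of sum differ in an even number of coordinates, and since $k$ is odd they cannot differ in all $k$, so they agree in some coordinate. For domination: if $v\in\{1,2\}^k$ has even sum, then its ``antipode'' $\bar v=(3-v_1,\dots,3-v_k)$ has sum $3k-\operatorname{sum}(v)$, which is odd for $k$ odd, so $\bar v\in S_k$ is a neighbour of $v$ in $K_t^k$; if instead some $v_j\ge 3$, the constraint $c_i\ne v_i$ forces $c_i=3-v_i$ at each coordinate with $v_i\in\{1,2\}$ and leaves the remaining $\ell\ge 1$ coordinates free in $\{1,2\}$, so exactly half of the $2^\ell$ admissible $c$ land in $S_k$. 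For even $k$ I lift by $S_k:=S_{k-1}\times[t]$, which is again independent dominating in $K_t^k$ and has the same weight as $S_{k-1}$. Since $S_k\subseteq\{1,2\}^k$,
\[
 \sum_{c\in S_k}\prod_{i=1}^k p_{c_i}\;\le\;\sum_{c\in\{1,2\}^k}\prod_{i=1}^k p_{c_i}\;=\;(p_1+p_2)^k,
\]
and for $t\ge 3$ we have $p_1+p_2<1$, so $r_i(G^k)\to 0$ and $I(G)=0$.

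The main obstacle is really the independence-and-domination verification above; the parity is designed precisely so that the only adjacent pair of $K_t^k$ sitting inside $\{1,2\}^k$, namely $\{v,\bar v\}$, is split across parities when $k$ is odd, which simultaneously prevents such a pair from being contained in $S_k$ and ensures that every $v\in\{1,2\}^k$ missed by $S_k$ is dominated by its antipode $\bar v$.
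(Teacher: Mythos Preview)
Your argument is correct and takes a genuinely different route from the paper for $t\ge 3$. The paper reduces to the bipartite case: it passes to the spanning subgraph $G'=K(n_1,\,n_2+\dots+n_t)$ and invokes $i((G')^k)\ge i(G^k)$ together with Lemma~\ref{lm unbalanced cb}. (That inequality, stated as ``obvious'', really rests on the homomorphism $\phi:K_t\to K_2$ collapsing classes $2,\dots,t$; for any independent dominating set $T$ of $K_2^k$ the preimage $(\phi^k)^{-1}(T)$ is independent dominating in $K_t^k$ with the same total weight, which yields the bound.) Your approach instead exhibits an explicit independent dominating set $S_k\subseteq\{1,2\}^k$ of $K_t^k$ via the parity construction, giving the sharper and self-contained estimate $r_i(G^k)\le (p_1+p_2)^k$ for odd $k$. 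Your structural reduction---that every independent dominating set of $G^k$ is a union of full fibres over an independent dominating set of $K_t^k$---is a clean observation that the paper does not isolate.

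Two small points of presentation. First, the remark computing $I(K_{m,m})=\tfrac12$ appears \emph{before} Lemma~\ref{lm unbalanced cb}, not after. Second, for even $k$ your $S_k=S_{k-1}\times[t]$ is not contained in $\{1,2\}^k$, so the sentence ``Since $S_k\subseteq\{1,2\}^k$'' applies only to the odd case; the even case is covered either by ``same weight as $S_{k-1}$'' (hence $\le (p_1+p_2)^{k-1}$) or, more simply, by dropping it entirely and invoking the monotonicity of $r_i(G^k)$ from Lemma~\ref{indep dom cap} along the odd subsequence.
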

\begin{proof}
For the case where $G \simeq K(m,m)$, Finbow proved that 
$I(G)=\frac{1}{2}$~\cite{kn:finbow}. 
When $t=2$ and $n_1 < n_2$ then $I(G)=0$, as is shown 
in Lemma~\ref{lm unbalanced cb}. 

\medskip 

\noindent
Assume $t \geq 3$. Let $G^{\prime}$ be the subgraph of 
$G$ obtained from $G$ by removing all edges 
except those with one endpoint in the smallest 
color class. Then, obviously,   
\[i(G^{\prime}) \geq i(G) \quad\text{and}\quad 
i((G^{\prime})^k) \geq i(G^k).\]
The graph $G^{\prime}$ is complete bipartite and the two 
color classes do not have the same size. Therefore, 
\[\lim_{k \rightarrow \infty} r_i((G^{\prime})^k) = 0 
\quad\Rightarrow\quad 
I(G) = \lim_{k \rightarrow \infty} r_i(G^k) =0.\]
This proves the theorem.
\qed\end{proof}

\end{document}